\newtheorem{theorem}{Theorem}
\newtheorem{prop}{Proposition}
\newtheorem{corol}{Corollary}
\newtheorem{lemma}{Lemma}
\newtheorem{definition}{Definition}
\def\qed{\ifvmode\Realemovelastskip\fi
{\unskip\nobreak\hfil\penalty50\hbox{}\nobreak\hfil \hbox{\vrule
height1.2ex width1.2ex}\parfillskip=0pt \finalhyphendemerits=0
\par\smallskip}}
\def\qedr{\ifvmode\Realemovelastskip\fi
{\unskip\nobreak\hfil\penalty50\hbox{}\nobreak\hfil \hbox{
$\diamond$}\parfillskip=0pt \finalhyphendemerits=0
\par\smallskip}}
\def\ds{\displaystyle}
\newenvironment{proof}{\noindent{\sl Proof:~~~}}{\quad \qed}
\def\beq{\begin{equation}}
\def\eeq{\end{equation}}
\def\bea{\begin{eqnarray}}
\def\eea{\end{eqnarray}}
\def\beann{\begin{eqnarray*}}
\def\eeann{\end{eqnarray*}}
\def\beasn{\begin{sneqnarray}}
\def\eeasn{\end{sneqnarray}}
\def\ben{\begin{enumerate}}
\def\een{\end{enumerate}}
\def\bit{\begin{itemize}}
\def\eit{\end{itemize}}
\def\derpar#1#2{\displaystyle\frac{\partial{#1}}{\partial{#2}}}
\def\derpars#1#2#3{\displaystyle\frac{\partial^2{#1}}{\partial{#2}\partial{#3}}}
\def\restric#1#2{\left.#1\right|_{#2}}
\def\vf{{\mathfrak{X}}}
\def\df{{\mit\Omega}}
\def\Lag{{\cal L}}
\def\Leg{{\cal FL}}
\def\d{{\rm d}}
\def\p{{\rm p}}
\def\Zahl{\mathbb{Z}}
\def\Real{\mathbb{R}}
\def\R{\mathbb{R}}
\def\pr{\operatorname{pr}}
\def\Tan{{\rm T}}
\def\Lie{\mathop{\rm L}\nolimits}
\def\inn{\mathop{i}\nolimits}
\def\Cinfty{{\rm C}^\infty}
\def\tabaddress#1{{\small\it\begin{tabular}[t]{c}#1
\\[1.2ex]\end{tabular}}}
\def\qed{\ifvmode\removelastskip\fi
{\unskip\nobreak\hfil\penalty50\hbox{}\nobreak\hfil \hbox{\vrule
height1.2ex width1.2ex}\parfillskip=0pt \finalhyphendemerits=0
\par\smallskip}}
\title{GEOMETRIC HAMILTON-JACOBI THEORY FOR HIGHER--ORDER AUTONOMOUS SYSTEMS}
\author{ 
{\sc  Leonardo Colombo\thanks{{\bf e}-{\it mail}: leo.colombo@icmat.es} }\\
{\sc  Manuel de Le\'on\thanks{{\bf e}-{\it mail}: mdeleon@icmat.es} }\\
\vspace{5mm}
   \tabaddress{Instituto de Ciencias Matem\'{a}ticas (CSIC-UAM-UC3M-UCM). \\
   C/ Nicol\'{a}s Cabrera 15. Campus Cantoblanco UAM. 28049 Madrid. Spain} \\
{\sc  Pedro Daniel Prieto-Mart\'\i nez\thanks{{\bf e}-{\it mail}: peredaniel@ma4.upc.edu} }\\
{\sc Narciso Rom\'an-Roy\thanks{{\bf e}-{\it mail}: nrr@ma4.upc.edu}}  \\
   \tabaddress{Departamento de Matem\'atica Aplicada IV.\\
Universitat Polit\`ecnica de Catalunya-Barcelona Tech.\\
   Edificio C-3, Campus Norte UPC.
   C/ Jordi Girona 1. 08034 Barcelona. Spain}
}
\begin{document}

\maketitle

\thispagestyle{empty}

\begin{abstract}
The geometric framework for the Hamilton--Jacobi theory
is used to study this theory in the ambient of
higher-order mechanical systems, both in the Lagrangian
and Hamiltonian formalisms.
Thus, we state the corresponding Hamilton--Jacobi equations
in these formalisms and apply our results to analyze some
particular physical examples.
\end{abstract}

\medskip
\noindent
{\sl Key words}:
Hamilton--Jacobi equation, higher--order Lagrangian and Hamiltonian systems,
Symplectic geometry.

\smallskip
\noindent
\vbox{\raggedleft AMS s.\,c.\,(2010): 53C15, 53C80, 70H03, 70H05, 70H20, 70H50.}\null

\markright{\rm  L. Colombo, M. de Le\'on, P.D. Prieto-Mart\'{\i}nez, N. Rom\'{a}n-Roy:
 \sl HJ theory  for higher-order}

\clearpage

\tableofcontents


\section{Introduction}
\label{sec:Introduction}

As it is well known, in classical mechanics, Hamilton--Jacobi theory
is a way to integrate a system of ordinary differential equations
(Hamilton equations) that, through an appropriate canonical
transformation \cite{Arn,JS}, is led to equilibrium. The equation to
be satisfied by the generating function of this transformation is a
partial differential equation whose solution allows us to integrate
the original system. In this respect, Hamilton--Jacobi theory
provides important physical examples of the deep connection between
first-order partial differential equations and systems of
first-order ordinary differential equations. It is also very close,
from the classical side, to the Schr\"odinger equation of quantum
mechanics since a complete solution of the Hamilton--Jacobi equation
allows us to reconstruct an approximate solution of the
Schr\"odinger equation (see for instance \cite{EMS04,MMMcq}). For
these reasons, Hamilton--Jacobi theory has been a matter of
continuous interest and has been studied classically also in other
ambients \cite{Gomis1}.

From the viewpoint of geometric mechanics, the intrinsic formulation
of Hamilton--Jacobi equation is also clear \cite{Ab,LM-87,MMM}.
Nevertheless, in the paper
\cite{art:Carinena_Gracia_Marmo_Martinez_Munoz_Roman06} a new
geometric framework for the Hamilton--Jacobi theory was presented
and the Hamilton--Jacobi equation was formulated both in the
Lagrangian and in the Hamiltonian formalisms of autonomous and
non-autonomous mechanics.
A similar generalization of the Hamilton-Jacobi formalism was outlined in \cite{KV-1993}.

Later on, this new geometric framework was used to state the
Hamilton--Jacobi theory in many different situations, such as
non-holonomic mechanical systems
\cite{HJTeam2,leones1,LOS-12,leones2,OFB-11,blo}, geometric
mechanics on Lie algebroids \cite{BMMP-10} and almost-Poisson
manifolds \cite{LMV-13}, singular systems \cite{LMV-12}, control
theory \cite{BLMMM-12,Wang1,Wang2}, classical field theories
\cite{LMM-09,LMMSV-12,DeLeon_Vilarino} and partial differential equations in general
\cite{Vi-11}, the geometric discretization of the Hamilton--Jacobi
equation \cite{OBL-11}, and others \cite{BLM-12,HJteam-K}.

The aim of this paper is to go ahead in this program, applying this
geometric description of the Hamilton--Jacobi theory
in the ambient of higher-order mechanical systems.
These kind of systems appear in many models in theoretical and mathematical physics;
for instance in the mathematical description of relativistic particles with spin, string theories,
gravitation, Podolsky's electromagnetism and others,
in some problems of fluid mechanics and classical physics, and
in numerical models arising from the geometric discretization of first-order dynamical systems
(see \cite{art:Prieto_Roman11,art:Prieto_Roman12_1}
for a long but non-exhaustive list of references).
The geometric descriptions of these systems
use higher-order tangent and jet bundles
(see, for instance, \cite{proc:Cantrijn_Crampin_Sarlet86,
art:Gracia_Pons_Roman91,art:Gracia_Pons_Roman92,art:Krupkova00,book:DeLeon_Rodrigues85,
art:Prieto_Roman11,art:Prieto_Roman12_1,art:Prieto_Roman13}).
Up to our knowledge, the Hamilton--Jacobi equation for these kind of systems
has deserved little attention
(although an approach to the problem for
higher-order field theories can be found in \cite{Vi-12}).
This paper wants to fill this gap.

Thus, after establishing some basic concepts and notation
about higher-order tangent and jet bundles in Section \ref{sec:ReviewHOTangentBundles},
we state in Section \ref{sec:LagFormalism} the generalized
and the standard Hamilton--Jacobi problems for the Lagrangian
formalism of higher-order systems, including a discussion about
complete solutions to both problems and detailed coordinate expressions
of the Lagrangian Hamilton--Jacobi equations so obtained.
Section \ref{HamHamProb} is devoted to present
the Hamilton--Jacobi problems and their complete solutions
in the Hamiltonian formalism, as a straightforward application of
the corresponding case for first-order mechanics stated in
\cite{art:Carinena_Gracia_Marmo_Martinez_Munoz_Roman06}
and some results of the Lagrangian case.
Finally, in Section \ref{sec:Examples}, our model is applied
to two real physical examples:
the system describing the dynamics of the end of a javelin
and a (homogeneous) deformed elastic cylindrical beam with fixed ends.

All the manifolds are real, second countable and $\Cinfty$.
The maps and the structures are assumed to be $\Cinfty$.
Sum over repeated indices is understood.

\section{Higher-order tangent bundles and canonical structures}
\label{sec:ReviewHOTangentBundles}


(See \cite{book:DeLeon_Rodrigues85}, \cite{book:Saunders89} for details).

Let $Q$ be a $n$-dimensional manifold, and $k \in \Zahl^+$. The
{\sl $k$th order tangent bundle of $Q$}
is the $(k+1)n$-dimensional manifold $\Tan^{k}Q$ made of the $k$-jets
of the bundle $\pi \colon \R \times Q \to \R$
with fixed source point $t = 0 \in \R$; that is, $\Tan^kQ = J_0^k\pi$.
It is a $1$-codimensional submanifold of $J^k\pi$.

We have the following natural projections (for $r \leqslant k$):
$$
\begin{array}{rcl}
\rho^k_r \colon \Tan^kQ & \longrightarrow & \Tan^rQ \\
j^k_0\phi & \longmapsto & j^r_0\phi
\end{array} \quad ; \quad
\begin{array}{rcl}
\beta^k \colon \Tan^kQ & \longrightarrow & Q \\
j^k_0\phi & \longmapsto & \phi(0)
\end{array}
$$
where $j^k_0\phi$ denotes a point in $\Tan^kQ$; that is,
the equivalence class of a curve $\phi \colon I \subset \R \to Q$
by the $k$-jet equivalence relation. Notice that $\rho^k_0 = \beta^k$,
where $\Tan^0Q$ is canonically identified with $Q$, and $\rho^k_k = {\rm Id}_{\Tan^kQ}$.
Observe also that $\rho^l_s \circ \rho^r_l = \rho^r_s$, for
$0 \leqslant s \leqslant l \leqslant r \leqslant k$.

 The natural projections $\rho^r_s
\colon \Tan^rQ \to \Tan^sQ$ are smooth surjective submersions and,
furthermore, the triple $(\Tan^rQ,\rho^r_s,\Tan^sQ)$ is a smooth
fiber bundle with fiber $\R^{(r-s)n}$ (see \cite{book:Saunders89}).
In particular, $(\Tan^kQ,\rho^k_r,\Tan^rQ)$ is a smooth fiber bundle
with fiber $\R^{(k-r)n}$, for $0\leqslant r \leqslant k$; that is,
$\Tan^kQ$ is canonically endowed with $k+1$ different affine bundle
structures given by the projections
$\rho^k_0,\rho^k_1,\ldots,\rho^k_k$. In the sequel, we refer to this
fiber bundle structure as {\sl the $\rho^k_r$-bundle structure of
$\Tan^kQ$}.

If $\left(U,\varphi\right)$ is a local chart in $Q$,
with $\varphi = (\varphi^A)$, $1\leqslant A \leqslant n$,
and $\phi \colon \R \to Q$ is a curve in $Q$ such that $\phi(0) \in U$;
by writing $\phi^A = \varphi^A \circ \phi$, the $k$-jet $j^k_0\phi$
is given in $\left(\beta^k\right)^{-1}(U) = \Tan^kU$ by
$\left(q^A,q^A_{1},\ldots,q^A_{k}\right)$, where
$$
q^A = \phi^A(0) \quad ; \quad
q_{i}^A = \restric{\frac{d^i\phi^A}{dt^i}}{t=0} \, .
$$
with $1\leqslant i \leqslant k$. Usually we write $q_{0}^A$ instead of $q^A$,
and so we have local coordinates $\left(q_{0}^A,q_{1}^A,\ldots,q_{k}^A\right) = \left(q_i^A\right)$,
with $0 \leqslant i \leqslant k$, in the open set $\left(\beta^k\right)^{-1}(U) \subseteq \Tan^kQ$.
Using these coordinates, the local expression of the canonical projections are
$$
\rho^k_r\left(q_0^A,q_1^A,\ldots,q_k^A\right) = \left(q_0^A,q_1^A,\ldots,q_r^A\right) \quad ; \quad
\beta^k\left(q_0^A,q_1^A,\ldots,q_k^A\right) = \left(q_0^A\right) \, .
$$
Hence, local coordinates in the open set $\left(\beta^k\right)^{-1}(U) \subseteq \Tan^kQ$
adapted to the $\rho^k_r$-bundle structure are $\left(q_0^A,\ldots,q_r^A,q_{r+1}^A,\ldots,q_{k}^A\right)$,
and a section $s \in \Gamma(\rho^k_r)$ is locally given in this open set by
$s(q_0^A,\ldots,q_r^A) = \left( q_0^A,\ldots,q_r^A,s_{r+1}^A,\ldots,s_k^A \right)$,
where $s_j^A$ (with $r+1 \leqslant j \leqslant k$) are local functions.

If $\phi \colon \R \to Q$ is a curve in $Q$,
the {\rm canonical lifting} of $\phi$ to $\Tan^kQ$
is the curve $j^k\phi\colon \Real\to\Tan^kQ$ defined as
the $k$-jet lifting of $\phi$ restricted to the submanifold
$\Tan^{k}Q \hookrightarrow J^{k}\pi$


Now, consider the bundle $\pi \colon \R \times Q \to \R$ and the natural projection
 $\pi^r_s \colon J^{r}\pi \to J^{s}\pi$, for $0 \leqslant s \leqslant r$.
Let $\vf(\pi^r_s)$ be the module of vector fields along the projection $\pi^r_s$.

\begin{definition}\label{def:VFHolonomicLift}
If $X \in \vf(\R)$, $\phi \in \Gamma(\pi)$ and $t_o \in \R$,
the {\rm $k$th holonomic lift} of $X$ by $\phi$ is defined as
the vector field $j^{k}X \in \vf(\pi^{k+1}_{k})$ whose associated derivation satisfies
$$
(d_{j^kX}f)(j^{k+1}\phi(t_o)) = d_X(f \circ j^{k}\phi)({t_o}) \, ,
$$
for every $f \in \Cinfty(J^k\pi)$; where $d_{j^kX}$ and $d_{X}$ are the derivations
associated to $j^kX$ and $X$, respectively.
\end{definition}

In local coordinates, if $X \in \vf(\R)$ is given by
$\ds X = X_o \derpar{}{t}$,
then the $k$th holonomic lift of $X$ is
$$
j^kX = X_o\left( \derpar{}{t} + \sum_{i=0}^{k} q_{i+1}^A\derpar{}{q_i^A}\right) \, .
$$

The \textsl{(non-autonomous) total time derivative} is the derivation associated to the
$k$th holonomic lift of the coordinate vector field $\ds\frac{\partial}{\partial t} \in \vf(\R)$,
which is denoted by $\ds\frac{d}{dt} \in \vf(\pi^{k+1}_{k})$, and
whose local expression is
$$
\frac{d}{dt} = \derpar{}{t} + \sum_{i=0}^{k}q_{i+1}^A\derpar{}{q^A_{i}} \, .
$$
Using the identification $J^k\pi \cong \R \times \Tan^kQ$ and denoting by
$\pi_2 \colon \R \times Q \to Q$ the natural projection and
all the induced projections in higher-order jet bundles,
we have the following diagram
$$
\xymatrix{
\R \times \Tan^{k+1}Q \ar[dd]_{\pi^{k+1}_{k}} \ar[rr]^-{\pi_2} & \ & \Tan^{k+1}Q \ar[dd]^{\rho^{k+1}_{k}} \\
\ & \ & \ \\
\R \times \Tan^{k}Q \ar[rr]^-{\pi_2} & \ & \Tan^{k}Q
}
$$
Hence, the total time derivative induces a vector field $T \in \vf(\rho^{k+1}_{k})$,
which satisfies
$$
T_{\pr_2(j^{k+1}\phi)} = \Tan\pi_2 \circ \restric{\frac{d}{dt}}{j^{k+1}\phi}  \, ,
$$
and the derivation associated to the vector field $T$, denoted by $d_T$,
is called {\sl the (autonomous) total time derivative}, or {\sl Tulczyjew's derivation}.

The total time derivative in higher-order tangent bundles can be
introduced in an equivalent way without using explicitly the jet
bundle structure (see \cite{Tu-75} for details).


\begin{definition}
A curve $\psi \colon \R \to \Tan^{k}Q$ is {\rm holonomic of type $r$}, $1 \leqslant r \leqslant k$,
if $j^{k-r+1}\phi = \rho^{k}_{k-r+1} \circ \psi$,
where $\phi = \beta^{k} \circ \psi \colon \R \to Q$;
that is, the curve $\psi$ is the lifting of a curve in $Q$ up to $\Tan^{k-r+1}Q$.
$$
\xymatrix{
\ & \ & \Tan^kQ \ar[d]_{\rho^k_{k-r+1}} \ar@/^2.5pc/[ddd]^{\beta^k} \\
\R \ar@/^1.5pc/[urr]^{\psi} \ar@/_1.5pc/[ddrr]_{\phi = \beta^k\circ\psi}
\ar[rr]^-{\rho^k_{k-r+1}\circ\psi} \ar[drr]_{j^{k-r+1}\phi}
& \ & \Tan^{k-r+1}Q \ar[d]_{{\rm Id}} \\
\ & \ & \Tan^{k-r+1}Q \ar[d]_{\beta^{k-r+1}} \\
\ & \ & Q
}
$$
In particular, a curve $\psi$ is {\rm holonomic of type 1}
if $j^{k}\phi = \psi$, with $\phi = \beta^{k} \circ \psi$.
Throughout this paper, holonomic curves of type $1$ are simply called {\rm holonomic}.
\end{definition}

\begin{definition}
A vector field $X \in \vf(\Tan^kQ)$ is a {\rm semispray of type $r$}, $1 \leqslant r \leqslant k$,
if every integral curve $\psi$ of $X$ is holonomic of type $r$.
\end{definition}

The local expression of a semispray of type $r$ is
$$
X = q_1^A\derpar{}{q_0^A} + q_2^A\derpar{}{q_1^A} + \ldots + q_{k-r+1}^A\derpar{}{q_{k-r}^A} +
F_{k-r+1}^A\derpar{}{q_{k-r+1}^A} + \ldots + F_k^A\derpar{}{q_k^A} \, .
$$

It is clear that every holonomic curve of type $r$ is also holonomic of
type $s$, for $s \geqslant r$. The same remark is true for
semisprays.

Observe that semisprays of type $1$ in
$\Tan^kQ$ are the analogue to the holonomic  (or SODE) vector fields in first-order mechanics.
Their local expressions are
$$
X = q_1^A\derpar{}{q_0^A} + q_2^A\derpar{}{q_1^A} + \ldots + q_{k}^A\derpar{}{q_{k-1}^A} + F^A\derpar{}{q_k^A} \, .
$$

If $X\in\vf(\Tan^kQ)$ is a semispray of type $r$, a
curve $\phi \colon \R \to Q$ is said to be a {\sl path} or {\sl
solution} of $X$ if $j^k\phi$ is an integral curve of $X$; that is,
$\widetilde{j^k\phi} = X \circ j^k\phi$, where $\widetilde{j^k\phi}$
denotes the canonical lifting of $j^k\phi$ from $\Tan^kQ$ to
$\Tan(\Tan^kQ)$. Then, in coordinates, $\phi$ verifies the following
system of differential equations of order $k+1$:
$$
\frac{d^{k-r+2}\phi^A}{d t^{k-r+2}} = X_{k-r+1}^A\left(\phi,\frac{d\phi}{d t},\ldots,\frac{d^k\phi}{d t^k}\right)
, \ \ldots \ ,
\frac{d^{k+1}\phi^A}{d t^{k+1}} = X_k^A\left(\phi,\frac{d\phi}{d t},\ldots,\frac{d^k\phi}{d t^k}\right) \ .
$$

\section{The Hamilton-Jacobi problem in the Lagrangian formalism}
\label{sec:LagFormalism}

Let $Q$ be the configuration space of an autonomous dynamical system
of order $k$ with $n$ degrees of freedom (that is, $Q$ is a
$n$-dimensional smooth manifold), and let $\Lag \in
\Cinfty(\Tan^{k}Q)$ be the Lagrangian function for this system. From
the Lagrangian function $\Lag$ we construct the Poincar\'{e}-Cartan
forms $\theta_{\Lag}\in\df^{1}(\Tan^{2k-1}Q)$ and
$\omega_\Lag = -\d\theta_{\Lag}\in \df^{2}(\Tan^{2k-1}Q)$,
as well as the Lagrangian energy $E_\Lag \in \Cinfty(\Tan^{2k-1}Q)$.
The coordinate
expressions of these elements are
\begin{align*}
&\theta_{\Lag} = \sum_{r=1}^k \sum_{i=0}^{k-r}(-1)^i d_T^i\left(\derpar{\Lag}{q_{r+i}^A}\right) \d q_{r-1}^A \, , \\
&\omega_\Lag =
\sum_{r=1}^k \sum_{i=0}^{k-r}(-1)^{i+1} d_T^i\,\d\left(\derpar{\Lag}{q_{r+i}^A}\right) \wedge \d q_{r-1}^A \, , \\
&E_\Lag = \sum_{r=1}^{k}
q_{r}^A \sum_{i=0}^{k-r} (-1)^i d_T^i\left( \derpar{\Lag}{q_{r+i}^A}
\right)- \Lag \, .
\end{align*}

We assume that the Lagrangian function is regular, and then
$\omega_\Lag$ is a symplectic form. Then the dynamical equation for
this Lagrangian system, which is
\begin{equation}\label{eqn:LagDynEq}
\inn(X_\Lag) \, \omega_\Lag = \d E_\Lag \, ,
\end{equation}
has a unique solution $X_\Lag \in \vf(\Tan^{2k-1}Q)$ which
is a semispray of type $1$ in $\Tan^{2k-1}Q$.
(For a detailed description of the Lagrangian and Hamiltonian formalisms
of higher-order dynamical systems see, for instance,
\cite{book:DeLeon_Rodrigues85,art:Prieto_Roman11,art:Prieto_Roman12_1,art:Prieto_Roman13}).

\subsection{The generalized Lagrangian Hamilton-Jacobi problem}
\label{sec:GenLagHJProb}

Following \cite{Ab} and
\cite{art:Carinena_Gracia_Marmo_Martinez_Munoz_Roman06}, we first
state a general version of the Hamilton-Jacobi problem in the
Lagrangian setting, which is the so-called \textsl{generalized
Lagrangian Hamilton-Jacobi problem}. For first-order systems, this
problem consists in finding vector fields $X \in \vf(Q)$ such that
the liftings of any integral curve of $X$ to $\Tan Q$ by $X$ itself
is an integral curve of the Lagrangian vector field $X_\Lag$. In
higher-order systems we can state an analogous problem. Thus, we
have the following definition:

\begin{definition}\label{def:GenLagHJDef}
The \textnormal{generalized $k$th-order Lagrangian Hamilton-Jacobi problem} consists in finding a section
$s \in \Gamma(\rho^{2k-1}_{k-1})$ and a vector field $X \in \vf(\Tan^{k-1}Q)$ such that,
if $\gamma \colon \R \to \Tan^{k-1}Q$ is an integral curve of $X$,
then $s \circ \gamma \colon \R \to \Tan^{2k-1}Q$ is an integral curve of $X_\Lag$;
that is,
\begin{equation}\label{eqn:GenLagHJDef}
X \circ \gamma = \dot \gamma \Longrightarrow X_\Lag \circ (s \circ \gamma) = \dot{\overline{s \circ \gamma}} \, .
\end{equation}
\end{definition}

\noindent\textbf{Remark:}
Observe that, since $X_\Lag$ is a semispray of type $1$, then every integral curve of $X_\Lag$
is the $(2k-1)$-jet lifting of a curve in $Q$. In particular, this holds for
the curve $s \circ \gamma$, that is, there exists a curve $\phi \colon \R \to Q$
such that
$$
j^{2k-1}\phi = s \circ \gamma \, .
$$
Then, composing both sides of the equality with $\rho^{2k-1}_{k-1}$ and bearing
in mind that $s \in \Gamma(\rho^{2k-1}_{k-1})$, we obtain
$$
\gamma = j^{k-1}\phi \, ,
$$
that is, the curve $\gamma$ is the $(k-1)$-jet lifting of a curve in $Q$. This enables us to restate the problem
as follows: \textit{The \textnormal{generalized $k$th-order Lagrangian Hamilton-Jacobi problem} consists
in finding a vector field $X_o \in \vf(Q)$ such that, if $\phi \colon \R \to Q$ is an integral curve of $X_o$,
then $j^{2k-1}\phi \colon \R \to \Tan^{2k-1}Q$ is an integral curve of $X_\Lag$; that is,}
$$
X_o \circ \phi = \dot{\phi} \Longrightarrow X_\Lag \circ (j^{2k-1}\phi)) = \dot{\overline{j^{2k-1}\phi}} \, .
$$
Nevertheless, we will stick to the previous statement (Definition \ref{def:GenLagHJDef}) in order to give several
different characterizations of the problem.

It is clear from Definition \ref{def:GenLagHJDef} that the vector field $X \in \vf(\Tan^{k-1}Q)$ cannot be
chosen independently from the section $s \in \Gamma(\rho^{2k-1}_{k-1})$. In fact:

\begin{prop}\label{prop:GenLagHJRelatedVF}
The pair $(s,X) \in \Gamma(\rho^{2k-1}_{k-1}) \times \vf(\Tan^{k-1}Q)$ satisfies the condition
\eqref{eqn:GenLagHJDef} if, and only if, $X$ and $X_\Lag$ are $s$-related; that is,
$X_\Lag \circ s = \Tan s \circ X$.
\end{prop}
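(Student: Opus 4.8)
The plan is to prove a standard "$f$-relatedness" equivalence for the maps and vector fields involved, specializing the general fact that two vector fields $X \in \vf(M)$ and $Y \in \vf(N)$ are $f$-related (for $f \colon M \to N$) if and only if $f$ sends integral curves of $X$ to integral curves of $Y$. Here $M = \Tan^{k-1}Q$, $N = \Tan^{2k-1}Q$, $f = s$, $X$ is our candidate vector field, and $Y = X_\Lag$. So the statement to establish is really the concrete instance of this textbook lemma adapted to the section $s \in \Gamma(\rho^{2k-1}_{k-1})$.

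First I would fix an arbitrary point $y \in \Tan^{k-1}Q$ and let $\gamma \colon \R \to \Tan^{k-1}Q$ be the integral curve of $X$ with $\gamma(0) = y$, so that $X \circ \gamma = \dot\gamma$; such a curve exists and is unique by the standard existence-uniqueness theorem for ODEs. The composite $s \circ \gamma \colon \R \to \Tan^{2k-1}Q$ is then a smooth curve, and its velocity is computed by the chain rule as
\[
\dot{\overline{s \circ \gamma}}(t) = \Tan s \left( \dot\gamma(t) \right) = \Tan s \left( X(\gamma(t)) \right) = (\Tan s \circ X)(\gamma(t)) \, .
\]
This is the key computation: it rewrites the velocity of the transported curve purely in terms of $\Tan s \circ X$ evaluated along $\gamma$.

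For the forward implication I would assume condition \eqref{eqn:GenLagHJDef}, so that $s \circ \gamma$ is an integral curve of $X_\Lag$, i.e. $\dot{\overline{s \circ \gamma}}(t) = X_\Lag(s(\gamma(t)))$ for all $t$. Combining this with the chain-rule identity above gives $(\Tan s \circ X)(\gamma(t)) = (X_\Lag \circ s)(\gamma(t))$, and in particular at $t = 0$ we obtain $(\Tan s \circ X)(y) = (X_\Lag \circ s)(y)$. Since $y$ was arbitrary and every point of $\Tan^{k-1}Q$ is the initial condition of some integral curve of $X$, this yields $X_\Lag \circ s = \Tan s \circ X$ on all of $\Tan^{k-1}Q$. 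Conversely, if $X_\Lag \circ s = \Tan s \circ X$ holds identically, then for any integral curve $\gamma$ of $X$ the chain-rule identity immediately gives $\dot{\overline{s \circ \gamma}}(t) = (\Tan s \circ X)(\gamma(t)) = (X_\Lag \circ s)(\gamma(t)) = X_\Lag((s \circ \gamma)(t))$, which is precisely the statement that $s \circ \gamma$ is an integral curve of $X_\Lag$, establishing \eqref{eqn:GenLagHJDef}.

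I expect the argument to be essentially routine, as it is the geometric content of $s$-relatedness. The only point requiring mild care is the direction of the forward implication: condition \eqref{eqn:GenLagHJDef} is an a priori statement about curves, and one must ensure that the pointwise identity $X_\Lag \circ s = \Tan s \circ X$ is recovered at \emph{every} point of $\Tan^{k-1}Q$, not merely along a single integral curve. This is handled by invoking the existence of an integral curve through each point; since $X$ is a smooth vector field on the manifold $\Tan^{k-1}Q$, through every point there passes such a curve, and evaluating the chain-rule identity at its initial instant delivers the equality at that point.
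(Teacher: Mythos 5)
Your proposal is correct and follows essentially the same route as the paper's own proof: both rely on the chain-rule identity $\dot{\overline{s \circ \gamma}} = \Tan s \circ \dot\gamma$, prove the converse by direct substitution, and obtain the forward implication from the fact that $X$ has an integral curve through every point of $\Tan^{k-1}Q$. Your version merely spells out the evaluation at $t=0$ more explicitly, which the paper leaves implicit.
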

\begin{proof}
If $(s,X)$ satisfies the condition \eqref{eqn:GenLagHJDef}, then for every integral curve $\gamma$ of
$X,$ we have
$$
X_\Lag \circ ( s \circ \gamma) = \dot{\overline{s \circ \gamma}}
= \Tan s \circ \dot\gamma = \Tan s \circ X \circ \gamma \, ,
$$
but, as $X$ has integral curves through every point $\bar{y} \in \Tan^{k-1}Q$,
this is equivalent to $X_\Lag \circ s = \Tan s \circ X$.

Conversely, if $X_{\Lag}$ and $X$ are
$s$-related and $\gamma\colon\R\rightarrow\Tan^{k-1}Q$ is an
integral curve of $X$, we have
$$
X_\Lag \circ s \circ \gamma = \Tan s\circ X \circ \gamma = \Tan s\circ\dot{\gamma} = \dot{\overline{s \circ \gamma}} \, .
$$
\end{proof}

Hence, the vector field $X \in \vf(\Tan^{k-1}Q)$ is related with the Lagrangian
vector field $X_\Lag$ and with the section $s \in \Gamma(\rho^{2k-1}_{k-1})$.
As a consequence of Proposition \ref{prop:GenLagHJRelatedVF}, composing
both sides of the equality $X_\Lag \circ s = \Tan s \circ X$ with $\Tan\rho^{2k-1}_{k-1}$
and bearing in mind that $\rho^{2k-1}_{k-1} \circ s = \textnormal{Id}_{\Tan^{k-1}Q}$, we have:

\begin{corol}\label{corol:GenLagHJRelatedVF}
If $(s,X)$ satisfies condition \eqref{eqn:GenLagHJDef}, then
$X = \Tan\rho^{2k-1}_{k-1} \circ X_\Lag \circ s$.
\end{corol}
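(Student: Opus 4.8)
The plan is to derive the corollary as an immediate algebraic consequence of the $s$-relatedness already established in Proposition~\ref{prop:GenLagHJRelatedVF}, so that no new geometric construction is needed. First I would invoke that proposition directly: since $(s,X)$ satisfies condition~\eqref{eqn:GenLagHJDef}, the vector fields $X$ and $X_\Lag$ are $s$-related, which is precisely the identity
$$
X_\Lag \circ s = \Tan s \circ X \, .
$$
This single equality is the only starting point required, and the rest of the argument is formal manipulation.

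Next I would left-compose both sides with the tangent map $\Tan\rho^{2k-1}_{k-1} \colon \Tan(\Tan^{2k-1}Q) \to \Tan(\Tan^{k-1}Q)$ of the projection $\rho^{2k-1}_{k-1}$, obtaining
$$
\Tan\rho^{2k-1}_{k-1} \circ X_\Lag \circ s = \Tan\rho^{2k-1}_{k-1} \circ \Tan s \circ X \, .
$$
The key step is then to simplify the right-hand side by functoriality of the tangent functor, namely $\Tan\rho^{2k-1}_{k-1} \circ \Tan s = \Tan(\rho^{2k-1}_{k-1} \circ s)$. Because $s \in \Gamma(\rho^{2k-1}_{k-1})$ is by definition a section of $\rho^{2k-1}_{k-1}$, we have $\rho^{2k-1}_{k-1} \circ s = \textnormal{Id}_{\Tan^{k-1}Q}$, and applying the tangent functor to the identity yields $\Tan(\textnormal{Id}_{\Tan^{k-1}Q}) = \textnormal{Id}_{\Tan(\Tan^{k-1}Q)}$. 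Hence the right-hand side collapses to $X$, giving the desired expression
$$
X = \Tan\rho^{2k-1}_{k-1} \circ X_\Lag \circ s \, .
$$

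There is essentially no serious obstacle here; the statement is a routine corollary of the previous proposition. The only points that require a little care are the direction of composition (so that the tangent maps compose in the correct order for functoriality to apply) and the verification that each composition is well defined along $s$: the map $\Tan s \circ X$ makes sense because $X$ takes values in $\Tan(\Tan^{k-1}Q)$, which is the domain of $\Tan s$, while $\Tan\rho^{2k-1}_{k-1}$ is applied to maps valued in $\Tan(\Tan^{2k-1}Q)$. Once this domain and codomain bookkeeping is checked, the result follows at once from the section property $\rho^{2k-1}_{k-1} \circ s = \textnormal{Id}_{\Tan^{k-1}Q}$ together with functoriality of $\Tan$.
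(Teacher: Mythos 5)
Your proof is correct and follows exactly the paper's own argument: invoke Proposition~\ref{prop:GenLagHJRelatedVF} to get $X_\Lag \circ s = \Tan s \circ X$, compose with $\Tan\rho^{2k-1}_{k-1}$, and use functoriality of $\Tan$ together with the section property $\rho^{2k-1}_{k-1} \circ s = \textnormal{Id}_{\Tan^{k-1}Q}$. No discrepancies to report.
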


Thus, the vector field $X$ is completely determined by the section
$s \in \Gamma(\rho^{2k-1}_{k-1})$, and it is called the
\textsl{vector field associated to $s$}. The
following diagram illustrates the situation
$$
\xymatrix{
\Tan(\Tan^{k-1}Q) \ar@/_1.5pc/[rr]_{\Tan s}  & \ & \ar[ll]_{\Tan\rho^{2k-1}_{k-1}} \Tan(\Tan^{2k-1}Q) \\
\ & \ & \ \\
\Tan^{k-1}Q \ar[uu]^{X} \ar@/_1.5pc/[rr]_{s} & \ & \ar[ll]_{\rho^{2k-1}_{k-1}} \Tan^{2k-1}Q \ar[uu]_{X_\Lag}
}
$$

Since the vector field $X$ is completely determined by the section $s$, the search of a pair
$(s,X) \in \Gamma(\rho^{2k-1}_{k-1}) \times \vf(\Tan^{k-1}Q)$ satisfying condition
\eqref{eqn:GenLagHJDef} is equivalent to the search of a section $s \in \Gamma(\rho^{2k-1}_{k-1})$
such that the pair $(s,\Tan\rho^{2k-1}_{k-1} \circ X_\Lag \circ s)$ satisfies the same condition.
Thus, we can give the following definition:

\begin{definition}
A \textnormal{solution to the generalized $k$th-order Lagrangian Hamilton-Jacobi problem}
for $X_\Lag$ is a section $s \in \Gamma(\rho^{2k-1}_{k-1})$ such that, if
$\gamma \colon \R \to \Tan^{k-1}Q$ is an integral curve of
$\Tan\rho^{2k-1}_{k-1} \circ X_\Lag \circ s$, then
$s \circ \gamma \colon \R \to \Tan^{2k-1}Q$ is an integral curve of
$X_\Lag$, that is,
$$
\Tan\rho^{2k-1}_{k-1} \circ X_\Lag \circ s \circ \gamma = \dot{\gamma}
\Longrightarrow X_\Lag \circ (s \circ \gamma) = \dot{\overline{s \circ \gamma}} \, .
$$
\end{definition}

Finally, we have the following result, which gives some equivalent conditions
for a section to be a solution to the generalized $k$th-order Lagrangian Hamilton-Jacobi problem:

\begin{prop}\label{prop:GenLagHJEquiv}
The following assertions on a section $s \in \Gamma(\rho^{2k-1}_{k-1})$
are equivalent.
\begin{enumerate}
\item The section $s$ is a solution to the generalized $k$th-order Lagrangian Hamilton-Jacobi problem.
\item The submanifold ${\rm Im}(s) \hookrightarrow \Tan^{2k-1}Q$
is invariant by the Euler-Lagrange vector field
$X_\Lag$ (that is, $X_\Lag$ is tangent to the submanifold
$s(\Tan^{k-1}Q) \hookrightarrow \Tan^{2k-1}Q$).
\item The section $s$ satisfies the equation
$$
\inn(X)(s^*\omega_\Lag) = \d(s^*E_\Lag) \, .
$$
where $X = \Tan\rho^{2k-1}_{k-1} \circ X_\Lag \circ s$ is the vector field associated to $s$.
\end{enumerate}
\end{prop}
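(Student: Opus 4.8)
The plan is to prove the equivalences by establishing the cycle $(1)\Leftrightarrow(2)$ and $(2)\Leftrightarrow(3)$, exploiting at each stage the characterization given by Proposition \ref{prop:GenLagHJRelatedVF}, namely that $s$ is a solution iff $X_\Lag$ and $X = \Tan\rho^{2k-1}_{k-1}\circ X_\Lag\circ s$ are $s$-related, $X_\Lag\circ s = \Tan s\circ X$. First I would prove $(1)\Leftrightarrow(2)$. By definition $(1)$ says exactly that $X_\Lag$ and $X$ are $s$-related. The key observation is that $s$-relatedness forces $X_\Lag$ to be tangent to $\mathrm{Im}(s)$: at a point $s(\bar y)\in\mathrm{Im}(s)$ the vector $X_\Lag(s(\bar y)) = \Tan s(X(\bar y))$ lies in the image of $\Tan_{\bar y}s$, which is precisely the tangent space to the submanifold $\mathrm{Im}(s)$ (recall $s$ is an embedding, being a section of a bundle projection). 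Conversely, if $X_\Lag$ is tangent to $\mathrm{Im}(s)$, then along the submanifold it can be written as $\Tan s$ of some vector field on $\Tan^{k-1}Q$; composing with $\Tan\rho^{2k-1}_{k-1}$ and using $\rho^{2k-1}_{k-1}\circ s = \mathrm{Id}$ identifies that vector field as $X$, giving back the $s$-relation. This is the conceptually cleanest part.

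Next I would handle $(2)\Leftrightarrow(3)$, which is the analytic heart of the argument. Starting from the dynamical equation \eqref{eqn:LagDynEq}, $\inn(X_\Lag)\omega_\Lag = \d E_\Lag$, I would pull back along $s$ and use naturality of the inner product under pullback. The crucial identity is that for any vector field $Y$ on $\mathrm{Im}(s)$ tangent to the submanifold one has $s^*(\inn(X_\Lag)\omega_\Lag) = \inn(X)(s^*\omega_\Lag)$ precisely when $X_\Lag$ is $s$-related to $X$; more carefully, $s^*(\inn(X_\Lag)\omega_\Lag)(Z) = \omega_\Lag(X_\Lag\circ s,\Tan s\circ Z)$ for $Z\in\vf(\Tan^{k-1}Q)$, and if $X_\Lag\circ s = \Tan s\circ X$ this equals $(s^*\omega_\Lag)(X,Z) = \inn(X)(s^*\omega_\Lag)(Z)$. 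Simultaneously, $s^*\d E_\Lag = \d(s^*E_\Lag)$ by commutativity of pullback with the exterior derivative. Hence $(2)$ — equivalently the $s$-relation, via $(1)\Leftrightarrow(2)$ — yields equation $(3)$ by simply pulling back \eqref{eqn:LagDynEq}.

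For the converse direction of $(2)\Leftrightarrow(3)$ I would argue that equation $(3)$ together with the regularity of $\Lag$ (so that $\omega_\Lag$ is symplectic) pins down the tangency. The main obstacle I anticipate is precisely this reverse implication: equation $(3)$ is an equation on the base $\Tan^{k-1}Q$ and a priori only determines the component of $X_\Lag\circ s$ tangent to $\mathrm{Im}(s)$, so I must rule out a nonzero normal component. The argument will use that $X_\Lag$ is the unique solution of the nondegenerate equation \eqref{eqn:LagDynEq} and that, by Corollary \ref{corol:GenLagHJRelatedVF}, $X = \Tan\rho^{2k-1}_{k-1}\circ X_\Lag\circ s$ is by construction the projection of $X_\Lag\circ s$; one then shows that when $(3)$ holds, $X_\Lag\circ s$ and $\Tan s\circ X$ satisfy the same equation $\inn(\cdot)\,s^*\omega_\Lag = \d(s^*E_\Lag)$ contracted against tangent directions, and nondegeneracy of $\omega_\Lag$ restricted appropriately forces $X_\Lag\circ s = \Tan s\circ X$, i.e. tangency. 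I would carry out the nondegeneracy step in adapted coordinates $(q_i^A)$ if an intrinsic argument proves delicate, since the symplectic nature of $\omega_\Lag$ guarantees the relevant contraction map is injective.
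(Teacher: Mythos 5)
Your treatment of $(1)\Leftrightarrow(2)$ and of the forward direction $(2)\Rightarrow(3)$ matches the paper's proof (the paper organizes it as $(1)\Leftrightarrow(2)$ and $(1)\Leftrightarrow(3)$, which is the same content). The genuine gap is in the converse direction $(3)\Rightarrow(2)$. You correctly identify the obstacle: setting $D_\Lag = X_\Lag\circ s - \Tan s\circ X$, equation $(3)$ only gives $(\omega_\Lag)_{s(\bar y)}\bigl(D_\Lag(\bar y),\Tan_{\bar y}s(u_{\bar y})\bigr)=0$ for vectors tangent to ${\rm Im}(s)$, and one must still rule out a ``normal'' component. But your proposed resolution --- that ``the symplectic nature of $\omega_\Lag$ guarantees the relevant contraction map is injective'' --- is not true on the strength of nondegeneracy alone. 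The map $v \mapsto \restric{\inn(v)\omega_\Lag}{\Tan{\rm Im}(s)}$ on the vertical space $V_{s(\bar y)}(\rho^{2k-1}_{k-1})$ is injective precisely when $V_{s(\bar y)}(\rho^{2k-1}_{k-1})\cap\bigl(\Tan_{s(\bar y)}{\rm Im}(s)\bigr)^{\bot}=0$, and a symplectic form on a manifold split as (tangent to a section) $\oplus$ (vertical) need not satisfy this: if the vertical subspace were $\omega_\Lag$-symplectic rather than isotropic, the pairing between the two summands could be totally degenerate.

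What actually closes the argument in the paper is a structural property of the Poincar\'e--Cartan form that you never invoke: $\omega_\Lag$ is annihilated by the contraction of \emph{two} $\rho^{2k-1}_{k-1}$-vertical vectors, i.e.\ the vertical distribution of $\rho^{2k-1}_{k-1}$ is $\omega_\Lag$-isotropic (this is visible in the coordinate expression of $\omega_\Lag$: every summand contains a factor $\d q^A_{r-1}$ with $r-1\leqslant k-1$). Combined with the fact that $D_\Lag$ is itself $\rho^{2k-1}_{k-1}$-vertical --- which follows from $X=\Tan\rho^{2k-1}_{k-1}\circ X_\Lag\circ s$ and $\rho^{2k-1}_{k-1}\circ s = {\rm Id}$, a point you do gesture at --- one gets $(\omega_\Lag)_{s(\bar y)}(D_\Lag(\bar y),B)=0$ for all vertical $B$ as well; then the splitting $\Tan_{s(\bar y)}\Tan^{2k-1}Q = \Tan_{s(\bar y)}{\rm Im}(s)\oplus V_{s(\bar y)}(\rho^{2k-1}_{k-1})$ and nondegeneracy of $\omega_\Lag$ yield $D_\Lag=0$. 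Your fallback of ``adapted coordinates'' could in principle uncover this isotropy, but as written the proposal attributes the conclusion to the wrong reason, so the key idea of the converse is missing.
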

\begin{proof}

\noindent ($1 \Longleftrightarrow 2$) Let $s$ be a solution to the
generalized $k$th-order Lagrangian Hamilton-Jacobi problem. Then by
Proposition \ref{prop:GenLagHJRelatedVF} the Lagrangian vector field
$X_{\Lag} \in \vf(\Tan^{2k-1}Q)$ is $s$-related to the
vector field $X = \Tan\rho^{2k-1}_{k-1} \circ X_\Lag \circ s \in \vf(\Tan^{k-1}Q)$
associated to $s$, and thus for every $\bar{y} \in \Tan^{k-1}Q$ we have
$$
X_{\Lag}(s(\bar{y})) = (X_{\Lag}\circ s)(\bar{y}) = (\Tan s \circ X)(\bar{y}) = \Tan s(X(\bar{y})) \, .
$$
Hence, $X_{\Lag}(s(\bar{y}))=\Tan s(X(\bar{y}))$ and therefore
$X_{\Lag}$ is tangent to the submanifold $\textnormal{Im}(s)\hookrightarrow
\Tan^{2k-1}Q.$

Conversely, if the submanifold $\textnormal{Im}(s)$ is
invariant under the flow of $X_{\Lag}$, then $X_{\Lag}(s(\bar{y}))
\in \Tan_{s(\bar{y})}\textnormal{Im}(s)$, for every $\bar{y} \in \Tan^{k-1}Q$;
that is, there exists an element $u_{\bar{y}} \in \Tan_{\bar{y}}\Tan^{k-1}Q$
such that $X_{\Lag}(s(\bar{y})) = \Tan_{\bar{y}}s(u_{\bar{y}})$. If we
define $X \in \vf(\Tan^{k-1}Q)$ as the vector field that satisfies
$\Tan_{\bar{y}}s(X_{\bar{y}}) = X_{\Lag}(s(\bar{y}))$, then $X$ is a
vector field in $\Tan^{k-1}Q$, since $X = \Tan\rho^{2k-1}_{k-1} \circ
X_\Lag \circ s$, and it is $s$-related with $X_{\Lag}$.
Therefore, by Proposition \ref{prop:GenLagHJRelatedVF}, $s$ is a solution to the generalized
$k$th-order Lagrangian Hamilton-Jacobi problem.

\noindent ($1 \Longleftrightarrow 3$) Let $s$ be a solution to the
generalized $k$th-order Lagrangian Hamilton-Jacobi problem. Taking the pull-back of
the Lagrangian dynamical equation \eqref{eqn:LagDynEq} by the section $s$ we have
$$
s^*\inn(X_{\Lag})\omega_{\Lag} = s^*\d E_{\Lag} = \d(s^*E_{\Lag}) \, ,
$$
but since $X$ and $X_{\Lag}$ are $s$-related by Proposition \ref{prop:GenLagHJRelatedVF},
we have that $s^*\inn(X_{\Lag})\omega_{\Lag} = \inn(X)s^*\omega_{\Lag}$, and hence
we obtain
$$
\inn(X)s^*\omega_{\Lag} = \d(s^*E_{\Lag}) \, .
$$

Conversely, consider the following vector field along
the section $s\in\Gamma(\rho_{k-1}^{2k-1})$
$$
D_\Lag = X_{\Lag} \circ s - \Tan s \circ X \colon \Tan^{k-1}Q \to \Tan(\Tan^{2k-1}Q) \, .
$$
We want to prove that $D_\Lag = 0$ or equivalently, as $\omega_{\Lag}$ is non-degenerate, that
$(\omega_{\Lag})_{s(\bar{y})}(D_\Lag(\bar{y}),Z_{s(\bar{y})}) = 0$,
for every tangent vector $Z_{s(\bar{y})} \in \Tan_{s(\bar{y})}\Tan^{2k-1}Q$.
Taking the pull-back of the Lagrangian dynamical equation, and using the hypothesis, we have
$$
s^{*}(\inn(X_{\Lag})\omega_{\Lag}) = s^{*}\d E_{\Lag} = \d(s^{*}E_{\Lag})=\inn(X)(s^{*}\omega_{\Lag}) \, ,
$$
and then the form
$s^{*}(\inn(X_{\Lag})\omega_{\Lag})-\inn(X)(s^{*}\omega_{\Lag}) \in \df^1(\Tan^{k-1}Q)$ vanishes.
Therefore, for every $\bar{y}\in \Tan^{k-1}Q$ and $u_{\bar{y}}\in \Tan_{\bar{y}}\Tan^{k-1}Q$, we have
\begin{align*}
0 &=
(s^*\inn(X_{\Lag})\omega_{\Lag} - \inn(X)s^*\omega_{\Lag})_{\bar{y}}(u_{\bar{y}}) \\
&= (\omega_{\Lag})_{s(\bar{y})}(X_{\Lag}(s(\bar{y})),\Tan_{\bar{y}}s(u_{\bar{y}}))
- (\omega_{\Lag})_{s(\bar{y})}(\Tan_{\bar{y}}s(X_{\bar{y}}),\Tan_{\bar{y}}s(u_{\bar{y}})) \\
&= (\omega_{\Lag})_{s(\bar{y})}(X_{\Lag}(s(\bar{y})) - \Tan_{\bar{y}}s(X_{\bar{y}}),\Tan_{\bar{y}}s(u_{\bar{y}})) \\
&= (\omega_{\Lag})_{s(\bar{y})}(D_\Lag(\bar{y}),\Tan_{\bar{y}}s(u_{\bar{y}})) \, .
\end{align*}
Therefore,
$(\omega_{\Lag})_{s(\bar{y})}(D_\Lag(\bar{y}),A_{s(\bar{y})}) = 0$, for
every $A_{s(\bar{y})} \in \Tan_{s(\bar{y})}{\rm Im}(s)$. Now recall that
every section defines a canonical splitting of the tangent space
of $\Tan^{2k-1}Q$ at every point given by
$$
\Tan_{s(\bar{y})}\Tan^{2k-1}Q = \Tan_{s(\bar{y})}{\rm Im}(s)
\oplus V_{s(\bar{y})}(\rho^{2k-1}_{k-1}) \, .
$$
Thus, we only need to prove that
$(\omega_{\Lag})_{s(\bar{y})}(D_\Lag(\bar{y}),B_{s(\bar{y})}) = 0$, for
every vertical tangent vector $B_{s(\bar{y})} \in
V_{s(\bar{y})}(\rho^{2k-1}_{k-1})$. Equivalently, as $\omega_{\Lag}$
is annihilated by the contraction of two $\rho^{2k-1}_{k-1}$-vertical
vectors, it suffices to prove that $D_\Lag$ is vertical with respect to
that submersion. Indeed,
\begin{align*}
\Tan\rho^{2k-1}_{k-1} \circ D_\Lag &=
\Tan\rho^{2k-1}_{k-1} \circ (X_{\Lag} \circ s - \Tan s \circ X) \\
&= \Tan\rho^{2k-1}_{k-1} \circ X_{\Lag} \circ s - \Tan\rho^{2k-1}_{k-1} \circ \Tan s \circ X \\
&= \Tan\rho^{2k-1}_{k-1} \circ X_{\Lag} \circ s - \Tan(\rho^{2k-1}_{k-1}\circ s) \circ X \\
&= \Tan\rho^{2k-1}_{k-1} \circ X_{\Lag} \circ s - X = 0
\end{align*}
Therefore
$(\omega_{\Lag})_{s(\bar{y})}(D_\Lag(\bar{y}),Z_{s(\bar{y})}) = 0$, for
every $Z_{s(\bar{y})} \in \Tan_{s(\bar{y})}\Tan^{2k-1}Q$, and as
$\omega_{\Lag}$ is non-degenerate, we have that $X_{\Lag}$ and $X$
are $s$-related, and by Proposition \ref{prop:GenLagHJRelatedVF} $s$
is a solution to the generalized $k$th-order Lagrangian Hamilton-Jacobi problem.
\end{proof}

Observe that if $s \in \Gamma(\rho^{2k-1}_{k-1})$ is a solution to the generalized
$k$th-order Lagrangian Hamilton-Jacobi problem then, taking into account Corollary \ref{corol:GenLagHJRelatedVF},
we can conclude that the integral curves of the Lagrangian vector field $X_\Lag$
contained in $\textnormal{Im}(s)$ project to $\Tan^{k-1}Q$ by $\rho^{2k-1}_{k-1}$ to integral curves of
$\Tan^{2k-1}_{k-1}\circ X_\Lag \circ s$. The converse, however, is not true
unless we assume further assumptions.

\noindent{\bf Remark:}
Notice that, except for the third item in Proposition \ref{prop:GenLagHJEquiv},
all the results stated in this Section hold for every vector field $Z \in \vf(\Tan^{2k-1}Q)$,
not only for the Lagrangian vector field $X_\Lag$. Indeed, the
assumption for $X_\Lag$ being the Lagrangian vector field solution to
the equation \eqref{eqn:LagDynEq} is only needed to prove that
the section $s \in \Gamma(\rho^{2k-1}_{k-1})$ and its associated
vector field $X \in \vf(\Tan^{k-1}Q)$ satisfy some kind of dynamical equation in $\Tan^{k-1}Q$.

\paragraph{Coordinate expression:}
Let $(q_0^A)$ be local coordinates in $Q$, and $(q_0^A,\ldots,q_{2k-1}^A)$ the induced
natural coordinates in $\Tan^{2k-1}Q$. Then,
local coordinates in $\Tan^{2k-1}Q$ adapted to the $\rho^{2k-1}_{k-1}$-bundle
structure are $(q_0^A,\ldots,q_{k-1}^A;q_{k}^A,\ldots,q_{2k-1}^A) \equiv (q_i^A;q_j^A)$,
where $i = 0,\ldots,k-1$ and $j = k,\ldots,2k-1$. Until the end of this section,
the indices $i,j$ will take the values above.
Hence, a section $s \in \Gamma(\rho^{2k-1}_{k-1})$ is given locally by
$s(q_0^A,\ldots,q_{k-1}^A) = (q_0^A,\ldots,q_{k-1}^A,s_{k}^A,\ldots,s_{2k-1}^A) \equiv (q_i^A,s_j^A)$,
where $s_j^A$ are local smooth functions in $\Tan^{k-1}Q$.

Let us check what is the local condition for a section $s \in \Gamma(\rho^{2k-1}_{k-1})$
to be a solution to the generalized $k$th-order Lagrangian Hamilton-Jacobi problem.
By Proposition \ref{prop:GenLagHJEquiv},
this is equivalent to require the Lagrangian vector field $X_\Lag \in \vf(\Tan^{2k-1}Q)$
to be tangent to the submanifold $\textnormal{Im}(s) \hookrightarrow \Tan^{2k-1}Q$.
As $\textnormal{Im}(s)$ is locally defined by the constraints
$q_j^A - s_j^A = 0$, we must require
$\Lie(X_\Lag)(q_j^A-s_j^A) \equiv X_\Lag(q_j^A-s_j^A) = 0$
(on $\textnormal{Im}(s)$), for $k \leqslant j \leqslant 2k-1$,
$1 \leqslant A \leqslant n$.
From \cite{book:DeLeon_Rodrigues85,art:Prieto_Roman11}
we know that the Lagrangian vector field $X_\Lag$ has the following
local expression
$$
X_\Lag = q_1^A\derpar{}{q_0^A} + q_2^A\derpar{}{q_1^A} +
\ldots + q_{2k-1}^A\derpar{}{q_{2k-2}^A} + F^A \derpar{}{q_{2k-1}^A} \, ,
$$
where $F^A$ are the functions solution to the following system of $n$ equations
\begin{equation}\label{eqn:LagVFLastComponents}
(-1)^k(F^B - d_T(q_{2k-1}^B))\derpars{\Lag}{q_k^B}{q_k^A}
+ \sum_{l=0}^k(-1)^l d_T^l\left(\derpar{\Lag}{q_l^A}\right) = 0 \, .
\end{equation}
Hence, the condition $\restric{X_\Lag(q_j^A-s_j^A)}{\textnormal{Im}(s)} = 0$ gives the following equations
\begin{equation}\label{eqn:GenLagHJLocal}
s_{j+1}^A - \sum_{i=0}^{k-2}q_{i+1}^B\derpar{s_j^A}{q_i^B} - s_{k}^B\derpar{s_j^A}{q_{k-1}^B} = 0 \quad ; \quad
\displaystyle \restric{F^A}{{\rm Im}(s)} - \sum_{i=0}^{k-2} q_{i+1}^B\derpar{s_{2k-1}^A}{q_i^B} - s_{k}^B\derpar{s_j^A}{q_{k-1}^B} = 0 \, ,
\end{equation}
This is a system of $kn$ partial differential equations
with $kn$ unknown functions $s_j^A$.
Thus, a section $s \in \Gamma(\rho^{2k-1}_{k-1})$ solution
to the generalized $k$th-order Lagrangian Hamilton-Jacobi problem must satisfy
the local equations \eqref{eqn:GenLagHJLocal}.

\subsection{The Lagrangian Hamilton-Jacobi problem}
\label{sec:LagHJProb}

In general, to solve the generalized $k$th-order Lagrangian Hamilton-Jacobi problem can be a difficult task,
since it amounts to find $kn$-dimensional submanifolds of $\Tan^{2k-1}Q$
invariant by the Lagrangian
vector field $X_\Lag$, or, equivalently, solutions to a large
system of partial differential equations with many
unknown functions. Therefore, in order to simplify the problem,
 it is convenient to impose some additional conditions to the section
$s \in \Gamma(\rho^{2k-1}_{k-1})$, thus considering a less general problem.

\begin{definition}\label{def:LagHJProb}
The \textnormal{$k$th-order Lagrangian Hamilton-Jacobi problem} consists in finding
sections $s \in \Gamma(\rho^{2k-1}_{k-1})$ solution to the generalized
$k$th-order Lagrangian Hamilton-Jacobi problem satisfying that
$s^*\omega_\Lag = 0$. Such a section is called a
\textnormal{solution to the $k$th-order Lagrangian Hamilton-Jacobi problem}.
\end{definition}

With the new assumption in Definition \ref{def:LagHJProb},
a straightforward consequence of Proposition \ref{prop:GenLagHJEquiv}  is:

\begin{prop}\label{prop:LagHJEquiv}
Let $s \in \Gamma(\rho^{2k-1}_{k-1})$ such that $s^*\omega_\Lag = 0$. The following assertions
are equivalent:
\begin{enumerate}
\item The section $s$ is a solution to the $k$th-order Lagrangian Hamilton-Jacobi problem.
\item $\d(s^*E_\Lag) = 0$.
\item $\textnormal{Im}(s)$ is a Lagrangian submanifold of $\Tan^{2k-1}Q$ invariant by $X_\Lag$.
\item The integral curves of $X_\Lag$ with initial conditions in $\textnormal{Im}(s)$ project
onto the integral curves of $X = \Tan\rho^{2k-1}_{k-1} \circ X_\Lag \circ s$.
\end{enumerate}
\end{prop}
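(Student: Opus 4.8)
The plan is to leverage Proposition \ref{prop:GenLagHJEquiv} as much as possible, since the new hypothesis $s^*\omega_\Lag = 0$ should collapse several of its equivalent conditions into simpler ones. The cleanest route is to establish a cycle of implications, say $(1)\Rightarrow(2)\Rightarrow(3)\Rightarrow(4)\Rightarrow(1)$, using the standing assumption $s^*\omega_\Lag=0$ at each stage rather than re-proving things from scratch.

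For $(1)\Leftrightarrow(2)$ I would invoke the third item of Proposition \ref{prop:GenLagHJEquiv}: a section $s$ is a solution to the generalized problem if and only if $\inn(X)(s^*\omega_\Lag)=\d(s^*E_\Lag)$, where $X=\Tan\rho^{2k-1}_{k-1}\circ X_\Lag\circ s$ is the associated vector field. Under the hypothesis $s^*\omega_\Lag=0$, the left-hand side vanishes identically, so this equation reduces precisely to $\d(s^*E_\Lag)=0$. Hence, with $s^*\omega_\Lag=0$ already assumed, being a solution to the $k$th-order problem (which by Definition \ref{def:LagHJProb} just means being a generalized solution with this extra condition) is equivalent to $\d(s^*E_\Lag)=0$, giving $(1)\Leftrightarrow(2)$ directly.

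For the geometric statement $(3)$, recall that $\textnormal{Im}(s)$ is a $kn$-dimensional submanifold of the $2kn$-dimensional symplectic manifold $(\Tan^{2k-1}Q,\omega_\Lag)$; the condition $s^*\omega_\Lag=0$ says exactly that $\omega_\Lag$ pulls back to zero on $\textnormal{Im}(s)$, which together with the half-dimension count means $\textnormal{Im}(s)$ is a Lagrangian submanifold. The invariance-by-$X_\Lag$ half of $(3)$ is the content of item $(2)$ of Proposition \ref{prop:GenLagHJEquiv}, which is equivalent to $s$ being a generalized solution. So $(3)$ amounts to "$\textnormal{Im}(s)$ Lagrangian (automatic from the hypothesis) plus generalized solution," matching $(1)$ under the standing assumption; I would therefore close $(2)\Rightarrow(3)$ and $(3)\Rightarrow(1)$ by citing these two items. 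Finally, $(4)$ is the projection statement: this follows from Corollary \ref{corol:GenLagHJRelatedVF} together with item $(1)$ of Proposition \ref{prop:GenLagHJEquiv}, since the defining property of a generalized solution is precisely that integral curves of $X_\Lag$ in $\textnormal{Im}(s)$ correspond, under $\rho^{2k-1}_{k-1}$, to integral curves of $X$; the remark following Proposition \ref{prop:GenLagHJEquiv} warns that the converse projection statement needs care, so I expect the main subtlety to lie in arguing $(4)\Rightarrow(1)$ cleanly rather than in the forward direction.

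The principal obstacle I anticipate is being careful that the equivalences of Proposition \ref{prop:GenLagHJEquiv} are stated for the \emph{generalized} problem, so every implication here must either assume or re-derive the extra condition $s^*\omega_\Lag=0$; the danger is a circular or incomplete argument where one silently uses solution-hood to get the isotropy. Since the proposition's preamble already fixes $s^*\omega_\Lag=0$ as a standing hypothesis, this is manageable: the dimension count guarantees the Lagrangian property for free, and the remaining equivalences are then genuinely \emph{straightforward consequences} of Proposition \ref{prop:GenLagHJEquiv}, exactly as the text promises.
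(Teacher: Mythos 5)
Your handling of $(1)\Leftrightarrow(2)$ and $(1)\Leftrightarrow(3)$ is correct and is precisely what the paper intends (the paper gives no written proof, presenting the proposition as a ``straightforward consequence'' of Proposition \ref{prop:GenLagHJEquiv}): under the standing hypothesis $s^*\omega_\Lag=0$, the equation in item 3 of Proposition \ref{prop:GenLagHJEquiv} collapses to $\d(s^*E_\Lag)=0$, and ${\rm Im}(s)$ is automatically Lagrangian by isotropy plus the count $\dim{\rm Im}(s)=kn=\tfrac{1}{2}\dim\Tan^{2k-1}Q$. Your $(1)\Rightarrow(4)$ is also essentially fine, provided you note that one first needs invariance of ${\rm Im}(s)$ (from item 2 of Proposition \ref{prop:GenLagHJEquiv}) to pass from ``initial conditions in ${\rm Im}(s)$'' to ``contained in ${\rm Im}(s)$'', and only then applies $s$-relatedness.

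The genuine gap is $(4)\Rightarrow(1)$: you explicitly flag it as ``the main subtlety'' and then never argue it, and it cannot be closed with the tools you cite. Indeed, the ingredients you invoke for item $(4)$ (Proposition \ref{prop:GenLagHJRelatedVF}, items 1 and 2 of Proposition \ref{prop:GenLagHJEquiv}, Corollary \ref{corol:GenLagHJRelatedVF}) hold, as the paper itself remarks, for an arbitrary vector field $Z\in\vf(\Tan^{2k-1}Q)$ in place of $X_\Lag$; but for arbitrary $Z$ the implication $(4)\Rightarrow(1)$ is simply false. For instance, if $Z$ is $\rho^{2k-1}_{k-1}$-vertical, then $X=\Tan\rho^{2k-1}_{k-1}\circ Z\circ s=0$, and every integral curve of $Z$ issuing from ${\rm Im}(s)$ projects to a constant curve, i.e.\ to an integral curve of $X$, so $(4)$ holds; yet ${\rm Im}(s)$ is in general not invariant, so $(1)$--$(3)$ fail. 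The missing idea is the one structural property of $X_\Lag$ you never use: it is a semispray of type $1$. Write $\sigma(t)=(q_0^A(t),\dots,q_{2k-1}^A(t))$ for the integral curve of $X_\Lag$ with $\sigma(0)=s(\bar y)$ and set $\gamma=\rho^{2k-1}_{k-1}\circ\sigma$; the semispray relations $\dot q_j^A=q_{j+1}^A$ show that condition $(4)$ says exactly $q_k^A(t)=s_k^A(\gamma(t))$ for all $t$. Differentiating this identity and using $\dot\gamma=X\circ\gamma$ gives $q_{k+1}^A(t)=(Xs_k^A)(\gamma(t))$; evaluating at $t=0$, where $\sigma(0)=s(\bar y)$ with $\bar y$ arbitrary, forces $s_{k+1}^A=Xs_k^A$ on all of $\Tan^{k-1}Q$, hence $q_{k+1}^A(t)=s_{k+1}^A(\gamma(t))$ for every $t$; iterating up to order $2k-1$ yields $\sigma=s\circ\gamma$, so every integral curve of $X_\Lag$ starting in ${\rm Im}(s)$ remains in ${\rm Im}(s)$. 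This is invariance, i.e.\ item 2 of Proposition \ref{prop:GenLagHJEquiv}, and $(1)$ follows. Without some argument of this kind, your cycle of implications does not close.
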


\paragraph{Coordinate expression:}
In coordinates we have
$$
\d E_\Lag = \derpar{E_\Lag}{q_i^A}\d q_i^A + \derpar{E_\Lag}{q_j^A}\d q_j^A
\quad ; \quad (0 \leqslant i \leqslant k-1,\ k \leqslant j \leqslant 2k-1) \ ,
$$
and taking the pull-back of this $1$-form by the section
$s(q_i^A) = (q_i^A,s_j^A)$, we obtain
\begin{align*}
s^*(\d E_\Lag)
= \left( \derpar{E_\Lag}{q_i^A} + \derpar{E_\Lag}{q_j^B}\derpar{s_j^B}{q_i^A} \right) \d q_i^A \, .
\end{align*}
Hence, the condition $\d(s^*E_\Lag) = 0$ in Proposition \ref{prop:LagHJEquiv}
is equivalent to the following $kn$ partial differential equations (on $\textnormal{Im}(s)$)
\begin{equation}\label{eqn:LagHJLocal}
\derpar{E_\Lag}{q_i^A} + \derpar{E_\Lag}{q_j^B}\derpar{s_j^B}{q_i^A} = 0
\end{equation}
Therefore, a section $s \in \Gamma(\rho^{2k-1}_{k-1})$ given locally by
$s(q_i^A) = (q_i^A,s_j^A(q_i^A))$ is a solution to the $k$th-order Lagrangian Hamilton-Jacobi problem
if, and only if, the local functions $s_j^A$ satisfy the system of $2kn$ partial differential equations
given by \eqref{eqn:GenLagHJLocal} and \eqref{eqn:LagHJLocal}.
Note that these $2kn$ partial differential equations may not be $\Cinfty(U)$-linearly independent.

In addition to the local equations for the section $s \in \Gamma(\rho^{2k-1}_{k-1})$,
we can state the equations for the characteristic Hamilton-Jacobi function.
These equations are a generalization to higher-order systems
of the classical Lagrangian Hamilon-Jacobi equations:
\begin{equation}\label{eqn:LagHJCharFunctLocalClassic}
\derpar{S}{q^A} = \derpar{\Lag}{v^A}(q^A,X^A)
\end{equation}
where $S \in \Cinfty(Q)$ is the characteristic function, $\Lag \in \Cinfty(\Tan Q)$ is the first-order Lagrangian,
$(q^A,v^A)$ are the natural coordinates in $\Tan Q$ and $X^A$ are the component functions
of the vector field $X \in \vf(Q)$ solution to the Lagrangian Hamilton-Jacobi problem
\cite{art:Carinena_Gracia_Marmo_Martinez_Munoz_Roman06}.

As $\omega_\Lag = -\d\theta_\Lag$, it is clear that
$s^*\omega_\Lag = 0$ if, and only if, $s^*(\d\theta_\Lag) = \d(s^*\theta_\Lag) = 0$;
that is, $s^*\theta_\Lag \in \df^{1}(\Tan^{k-1}Q)$
is a closed $1$-form. Using Poincar\'{e}'s Lemma, $s^*\theta_\Lag$ is locally exact, and thus
there exists $W \in \Cinfty(U)$, with $U \subseteq \Tan^{k-1}Q$
an open set, such that $s^*\theta_\Lag\vert_U = \d W$. In coordinates,
$$
\theta_\Lag = \sum_{i=0}^{k-1} \sum_{l=0}^{k-i-1} (-1)^ld_T^l\left( \derpar{\Lag}{q_{i+l+1}^A} \right)\d q_{i}^A \, ,
$$
and thus the local expression of $s^*\theta_\Lag$ is
$$
s^*\theta_\Lag = \sum_{i=0}^{k-1} \sum_{l=0}^{k-i-1} \restric{(-1)^ld_T^l\left( \derpar{\Lag}{q_{i+1+l}^A} \right)}{{\rm Im}(s)} \d q_i^A \\
$$
Hence, from the identity $s^*\theta_\Lag = \d W$ we obtain
\begin{equation}\label{eqn:LagHJCharFunctLocal}
\derpar{W}{q_i^A} =
\sum_{l=0}^{k-i-1} \restric{(-1)^ld_T^l\left( \derpar{\Lag}{q_{i+1+l}^A} \right)}{{\rm Im}(s)} \, ,
\end{equation}
which is a system of $kn$ partial differential equations for $W$ that
clearly generalizes equations \eqref{eqn:LagHJCharFunctLocalClassic}
to higher-order systems.

\subsection{Complete solutions}
\label{sec:LagHJCompleteSolutions}

In the above Sections we have stated the $k$th-order Hamilton-Jacobi problem in the Lagrangian formalism,
and a section $s \in \Gamma(\rho^{2k-1}_{k-1})$ solution to this problem gives a particular solution
to the dynamical equation \eqref{eqn:LagDynEq}. Nevertheless, this is not a complete solution
to the system, since only the integral curves of $X_\Lag$ with initial conditions in $\textnormal{Im}(s)$
can be recovered from the solution to the Hamilton-Jacobi problem. Hence, in order to obtain
a complete solution to the problem, we need to foliate the phase space $\Tan^{2k-1}Q$
in such a way that
every leaf is the image set of a section solution to the $k$th-order Lagrangian Hamilton-Jacobi problem.
The precise definition is:

\begin{definition}
A \textnormal{complete solution to the $k$th-order Lagrangian Hamilton-Jacobi problem} is a local diffeomorphism
$\Phi \colon U \times \Tan^{k-1}Q \to \Tan^{2k-1}Q$, with $U \subseteq \R^{kn}$
an open set, such that for every $\lambda \in U$, the map
$s_\lambda(\bullet) \equiv \Phi(\lambda,\bullet) \colon \Tan^{k-1}Q \to \Tan^{2k-1}Q$ is a solution
to the $k$th-order Lagrangian Hamilton-Jacobi problem.
\end{definition}

\noindent\textbf{Remark:}
Usually, it is the set of maps $\{ s_\lambda \mid \lambda \in U\}$ which is called a
complete solution of the $k$th-order Lagrangian Hamilton-Jacobi problem, instead of the
map $\Phi$. Both definitions are clearly equivalent.

It follows from this last definition that a complete solution provides $\Tan^{2k-1}Q$
with a foliation transverse to the fibers, and that every leaf of this foliation
has dimension $kn$ and is invariant by the Lagrangian vector field $X_\Lag$.

Let $\Phi$ be a complete solution, and we consider the family of vector fields
$$
\left\{ X_\lambda = \Tan\rho^{2k-1}_{k-1} \circ X_\Lag \circ s_\lambda \in \vf(\Tan^{k-1}Q)
\ ; \ \lambda \in U \subseteq \R^{kn} \right\} \, ,
$$
where $s_\lambda \equiv \Phi(\lambda,\bullet)$. Then, the integral curves of $X_\lambda$,
for different $\lambda \in U$, will provide all the integral curves of the Lagrangian
vector field $X_\Lag$. That is, if $\bar{y} \in \Tan^{2k-1}Q$, then there exists
$\lambda_o \in U$ such that if $p_o = \rho^{2k-1}_{k-1}(\bar{y})$, then
$s_{\lambda_o}(p_o) = \bar{y}$, and the integral curve
of $X_{\lambda_o}$ through $p_o$, lifted to $\Tan^{2k-1}Q$ by $s_{\lambda_o}$,
gives the integral curve of $X_\Lag$ through $\bar{y}$.

\section{The Hamilton-Jacobi problem in the Hamiltonian formalism}
\label{HamHamProb}

Let $Q$ be a $n$-dimensional smooth manifold modeling the
configuration space of a $k$th-order autonomous dynamical system
with $n$ degrees of freedom, and let $h \in
\Cinfty(\Tan^*(\Tan^{k-1}Q))$ be a Hamiltonian function containing
the dynamical information for this system. Using the canonical Liouville forms
of the cotangent bundle, namely
$\theta_{k-1} = p_{A}^{i}\d q_{i}^{A} \in
\df^{1}(\Tan^*(\Tan^{k-1}Q))$ and $\omega_{k-1} = \d q_{i}^{A}\wedge
\d p_{A}^{i} \in \df^{2}(\Tan^*(\Tan^{k-1}Q))$, where
$(q_{i}^{A},p_{i}^{A})$ with $1\leqslant A\leqslant n,$ $0\leqslant i\leqslant k-1$ are
canonical coordinates on $\Tan^{*}(\Tan^{k-1}Q)$, we can state the
dynamical equation for this Hamiltonian system,
\begin{equation}\label{eqn:HamDynEq}
\inn(X_h)\,\omega_{k-1} = \d h \, ,
\end{equation}
which has a unique solution $X_h \in \vf(\Tan^*(\Tan^{k-1}Q))$
due to the fact that $\omega_{k-1}$ is non-degenerate,
regardless of the Hamiltonian function provided.

As the formalism is developed in the cotangent bundle $\Tan^*(\Tan^{k-1}Q)$,
the statement of the Hamiltonian Hamilton-Jacobi theory for higher-order
systems follows the same pattern as in the first-order case
(see \cite{art:Carinena_Gracia_Marmo_Martinez_Munoz_Roman06}).
Next we detail the main results.

\subsection{The generalized Hamiltonian Hamilton-Jacobi problem}
\label{sec:GenHamHJProb}

\begin{definition}\label{def:GenHamHJProblem}
The {\rm generalized $k$th-order Hamiltonian Hamilton-Jacobi problem}
consists in finding a $1$-form $\alpha \in \df^{1}(\Tan^{k-1}Q)$
and a vector field $X \in \vf(\Tan^{k-1}Q)$ such that,
if $\gamma \colon \R \to \Tan^{k-1}Q$ is an integral curve of $X$,
then $\alpha \circ \gamma \colon \R \to \Tan^*(\Tan^{k-1}Q)$ is an integral
curve of $X_h$; that is,
\begin{equation}\label{eqn:GenHamHJDef}
X \circ \gamma = \dot{\gamma} \Longrightarrow
X_h \circ (\alpha \circ \gamma) = \dot{\overline{\alpha \circ \gamma}} \, .
\end{equation}
\end{definition}

\begin{prop}\label{prop:GenHamHJRelatedVF}
The pair $(\alpha,X) \in \df^{1}(\Tan^{k-1}Q) \times \vf(\Tan^{k-1}Q)$
safisfies the condition \eqref{eqn:GenHamHJDef} if, and only if,
$X$ and $X_h$ are $\alpha$-related, that is, $X_h \circ \alpha = \Tan\alpha \circ X$.
\end{prop}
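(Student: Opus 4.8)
The plan is to mirror exactly the argument of Proposition~\ref{prop:GenLagHJRelatedVF}, noting that a $1$-form $\alpha \in \df^{1}(\Tan^{k-1}Q)$ is precisely a section of the cotangent bundle projection $\Tan^*(\Tan^{k-1}Q) \to \Tan^{k-1}Q$, and hence plays the same structural role that the section $s$ played in the Lagrangian setting. The single geometric fact underlying both implications is the chain rule for tangent maps: for any curve $\gamma \colon \R \to \Tan^{k-1}Q$, the velocity of the composite $\alpha \circ \gamma$ satisfies $\dot{\overline{\alpha \circ \gamma}} = \Tan\alpha \circ \dot\gamma$.

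For the forward implication I would assume that $(\alpha,X)$ satisfies \eqref{eqn:GenHamHJDef} and pick an arbitrary integral curve $\gamma$ of $X$. The defining property then gives $X_h \circ (\alpha \circ \gamma) = \dot{\overline{\alpha \circ \gamma}}$, and substituting the chain rule together with $\dot\gamma = X \circ \gamma$ turns the right-hand side into $\Tan\alpha \circ X \circ \gamma$. Thus $X_h \circ \alpha \circ \gamma = \Tan\alpha \circ X \circ \gamma$ holds along every integral curve. The crucial step is to promote this along-the-curve identity to the pointwise equality $X_h \circ \alpha = \Tan\alpha \circ X$ on all of $\Tan^{k-1}Q$; this is justified because the vector field $X$ admits an integral curve through every point $\bar y \in \Tan^{k-1}Q$, so the evaluations at these curves exhaust the manifold.

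The converse is a direct substitution: assuming $X_h \circ \alpha = \Tan\alpha \circ X$ and letting $\gamma$ be any integral curve of $X$, I would compute
$$
X_h \circ \alpha \circ \gamma = \Tan\alpha \circ X \circ \gamma = \Tan\alpha \circ \dot\gamma = \dot{\overline{\alpha \circ \gamma}} \, ,
$$
which is exactly the conclusion required by \eqref{eqn:GenHamHJDef}.

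I expect no genuine obstacle, since the proof transcribes the Lagrangian case verbatim with $s$ replaced by $\alpha$ and $X_\Lag$ replaced by $X_h$; the only point deserving a word of care is the passage from the along-the-curves statement to the global one in the forward direction, which relies on the existence of integral curves of $X$ through every point of $\Tan^{k-1}Q$.
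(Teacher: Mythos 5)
Your proof is correct and takes exactly the paper's approach: the paper's own proof simply states that the argument is ``like in the Lagrangian case,'' i.e.\ a transcription of Proposition~\ref{prop:GenLagHJRelatedVF} with $s$ replaced by $\alpha$ and $X_\Lag$ by $X_h$, which is precisely what you carried out. Your explicit attention to promoting the along-the-curve identity to a pointwise one via existence of integral curves through every point matches the corresponding step in the paper's Lagrangian proof.
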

\begin{proof}
The proof is like in the Lagrangian case and is the same as the one
of Prop. 5 in
\cite{art:Carinena_Gracia_Marmo_Martinez_Munoz_Roman06}.
\end{proof}

Now, from Proposition \ref{prop:GenHamHJRelatedVF},
composing both sides of the equality
$X_h \circ \alpha = \Tan\alpha \circ X$
with $\Tan\pi_{\Tan^{k-1}Q}$, and bearing in mind
that $\alpha \in \df^{1}(\Tan^{k-1}Q) = \Gamma(\pi_{\Tan^{k-1}Q})$, we obtain the
following result:

\begin{corol}
If $(\alpha,X)$ satisfies condition \eqref{eqn:GenHamHJDef}, then
$X = \Tan\pi_{\Tan^{k-1}Q} \circ X_h \circ \alpha$.
\end{corol}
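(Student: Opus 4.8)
The plan is to reduce the statement to the infinitesimal form of the Hamilton--Jacobi condition already isolated in Proposition \ref{prop:GenHamHJRelatedVF}, and then to project that identity back down to $\Tan^{k-1}Q$ along the canonical cotangent projection $\pi_{\Tan^{k-1}Q}$. This is the exact Hamiltonian mirror of Corollary \ref{corol:GenLagHJRelatedVF} in the Lagrangian formalism, with $\pi_{\Tan^{k-1}Q}$ now playing the structural role that $\rho^{2k-1}_{k-1}$ played there; so I expect the argument to follow the same two moves: apply the $\alpha$-relatedness, then use that $\alpha$ is a section.

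First I would invoke Proposition \ref{prop:GenHamHJRelatedVF}, so that the hypothesis \eqref{eqn:GenHamHJDef} becomes the equality $X_h \circ \alpha = \Tan\alpha \circ X$. Composing both sides on the left with the tangent map $\Tan\pi_{\Tan^{k-1}Q}$ gives
$$
\Tan\pi_{\Tan^{k-1}Q} \circ X_h \circ \alpha = \Tan\pi_{\Tan^{k-1}Q} \circ \Tan\alpha \circ X \, .
$$
The key algebraic step is functoriality of the tangent functor, which collapses the composition of tangent maps on the right into $\Tan(\pi_{\Tan^{k-1}Q} \circ \alpha) \circ X$. I would then use that a $1$-form is by definition a section: since $\alpha \in \df^{1}(\Tan^{k-1}Q) = \Gamma(\pi_{\Tan^{k-1}Q})$, we have $\pi_{\Tan^{k-1}Q} \circ \alpha = {\rm Id}_{\Tan^{k-1}Q}$, whence $\Tan(\pi_{\Tan^{k-1}Q} \circ \alpha) = {\rm Id}_{\Tan(\Tan^{k-1}Q)}$. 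The right-hand side therefore reduces to $X$, yielding the claimed formula $X = \Tan\pi_{\Tan^{k-1}Q} \circ X_h \circ \alpha$.

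I do not anticipate a genuine obstacle here: the result is a purely formal consequence of the section property of $\alpha$ together with the functoriality of $\Tan$, and carries no analytic or geometric content beyond that. The only point demanding minimal care is notational bookkeeping --- making sure that $\Tan$ is read as the tangent functor applied to the smooth maps $\pi_{\Tan^{k-1}Q}$ and $\alpha$ (rather than as the tangent bundle of a manifold), and that all compositions are taken in the correct order so that the base identity $\pi_{\Tan^{k-1}Q} \circ \alpha = {\rm Id}$ may be differentiated without ambiguity.
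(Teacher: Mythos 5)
Your proof is correct and matches the paper's own argument exactly: the paper derives this corollary by composing the $\alpha$-relatedness identity $X_h \circ \alpha = \Tan\alpha \circ X$ from Proposition \ref{prop:GenHamHJRelatedVF} with $\Tan\pi_{\Tan^{k-1}Q}$ and using that $\alpha$ is a section of $\pi_{\Tan^{k-1}Q}$. Nothing is missing.
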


Hence, the vector field $X \in \vf(\Tan^{k-1}Q)$ is completely
determined by the $1$-form $\alpha$, and it is called the
\textsl{vector field associated to $\alpha$}. The
following diagram illustrates the situation
$$
\xymatrix{
\Tan(\Tan^{k-1}Q) \ar@/_1.5pc/[rrr]_{\Tan \alpha} & \ & \ &
 \ar[lll]_-{\Tan\pi_{\Tan^{k-1}Q}} \Tan(\Tan^*(\Tan^{k-1}Q)) \\
\ & \ & \ & \ \\
\Tan^{k-1}Q \ar[uu]^{X} \ar@/_1.5pc/[rrr]_{\alpha} & \ & \ &
 \ar[lll]_{\pi_{\Tan^{k-1}Q}} \Tan^*(\Tan^{k-1}Q) \ar[uu]_{X_h}
}
$$

Since the vector field $X$ is completely determined by the $1$-form $\alpha$, the problem
of finding a pair
$(\alpha,X) \in \df^{1}(\Tan^{k-1}Q) \times \vf(\Tan^{k-1}Q)$
that satisfies the condition \eqref{eqn:GenHamHJDef} is equivalent to the problem of finding a $1$-form
$\alpha \in \df^{1}(\Tan^{k-1}Q)$ satisfying the same condition
with the associated vector field
$\Tan\pi_{\Tan^{k-1}Q} \circ X_h \circ \alpha \in \vf(\Tan^{k-1}Q)$.
Hence, we can define:

\begin{definition}
A \textnormal{solution to the generalized $k$th-order Hamiltonian Hamilton-Jacobi problem}
for $X_h$ is a $1$-form $\alpha \in \df^{1}(\Tan^{k-1}Q)$
such that if $\gamma \colon \R \to \Tan^{k-1}Q$ is an integral curve of
$X = \Tan\pi_{\Tan^{k-1}Q} \circ X_h \circ \alpha$,
then $\alpha \circ \gamma \colon \R \to \Tan^*(\Tan^{k-1}Q)$ is an integral curve of
$X_h$; that is,
$$
\Tan\pi_{\Tan^{k-1}Q} \circ X_h \circ \alpha \circ \gamma = \dot \gamma
\Longrightarrow X_h \circ (\alpha \circ \gamma) = \dot{\overline{\alpha \circ \gamma}} \, .
$$
\end{definition}

\begin{prop}\label{prop:GenHamHJEquiv}
The following conditions on a $1$-form
$\alpha \in \df^{1}(\Tan^{k-1}Q)$ are equivalent.
\begin{enumerate}
\item The form $\alpha$ is a solution to the generalized $k$th-order Hamiltonian Hamilton-Jacobi problem.
\item The submanifold $\textnormal{Im}(\alpha) \hookrightarrow \Tan^*(\Tan^{k-1}Q)$
is invariant under the flow of the Hamiltonian vector field $X_h$
(that is, $X_h$ is tangent to the submanifold $\textnormal{Im}(\alpha)$).
\item The form $\alpha$ satisfies the equation
\begin{equation*}
\inn(X)\d\alpha = -\d(\alpha^*h) \, ,
\end{equation*}
where $X = \Tan\pi_{\Tan^{k-1}Q} \circ X_h \circ \alpha$
is the vector field associated to $\alpha$.
\end{enumerate}
\end{prop}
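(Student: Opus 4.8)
The plan is to run the same three-step scheme used in the proof of Proposition \ref{prop:GenLagHJEquiv}, replacing the Poincar\'e--Cartan data $(\omega_\Lag,E_\Lag)$ on $\Tan^{2k-1}Q$ by the canonical data $(\omega_{k-1},h)$ on $\Tan^*(\Tan^{k-1}Q)$. The single structural fact that makes the Hamiltonian version cleaner is the tautological property of the Liouville form: for every $\alpha \in \df^{1}(\Tan^{k-1}Q)$, regarded as a section of $\pi_{\Tan^{k-1}Q}$, one has $\alpha^*\theta_{k-1} = \alpha$. Since $\omega_{k-1} = -\d\theta_{k-1}$, this yields at once $\alpha^*\omega_{k-1} = -\d\alpha$, which takes over the role played by $s^*\omega_\Lag$ in the Lagrangian argument.

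For $(1 \Longleftrightarrow 2)$ I would argue pointwise. By Proposition \ref{prop:GenHamHJRelatedVF}, $\alpha$ is a solution if and only if $X_h$ and $X$ are $\alpha$-related, i.e.\ $X_h \circ \alpha = \Tan\alpha \circ X$. Evaluated at $\bar y \in \Tan^{k-1}Q$ this reads $X_h(\alpha(\bar y)) = \Tan_{\bar y}\alpha(X(\bar y))$, and since $\Tan_{\bar y}\alpha$ maps into $\Tan_{\alpha(\bar y)}\textnormal{Im}(\alpha)$, this is precisely the tangency of $X_h$ to $\textnormal{Im}(\alpha)$. Conversely, tangency gives $X_h(\alpha(\bar y)) = \Tan_{\bar y}\alpha(u_{\bar y})$ for some $u_{\bar y}$; applying $\Tan\pi_{\Tan^{k-1}Q}$ and using $\pi_{\Tan^{k-1}Q}\circ\alpha = \textnormal{Id}$ forces $u_{\bar y} = X(\bar y)$, so $X_h$ and $X$ are $\alpha$-related and Proposition \ref{prop:GenHamHJRelatedVF} closes the loop.

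For $(1 \Longleftrightarrow 3)$, pulling the dynamical equation \eqref{eqn:HamDynEq} back by $\alpha$ gives $\alpha^*(\inn(X_h)\omega_{k-1}) = \d(\alpha^*h)$; using $\alpha$-relatedness to carry the contraction through the pull-back, $\alpha^*(\inn(X_h)\omega_{k-1}) = \inn(X)\alpha^*\omega_{k-1} = -\inn(X)\d\alpha$, which is the stated equation. For the converse I would reproduce the defect-field argument: set $D_h = X_h \circ \alpha - \Tan\alpha \circ X$ and show it vanishes. As $\omega_{k-1}$ is nondegenerate, it suffices to check $(\omega_{k-1})_{\alpha(\bar y)}(D_h(\bar y),Z) = 0$ for all $Z$. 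Hypothesis (3) together with the pull-back of \eqref{eqn:HamDynEq} makes this pairing vanish on vectors tangent to $\textnormal{Im}(\alpha)$; then, using the canonical splitting $\Tan_{\alpha(\bar y)}\Tan^*(\Tan^{k-1}Q) = \Tan_{\alpha(\bar y)}\textnormal{Im}(\alpha) \oplus V_{\alpha(\bar y)}(\pi_{\Tan^{k-1}Q})$ and the fact that $\omega_{k-1}$ annihilates pairs of $\pi_{\Tan^{k-1}Q}$-vertical vectors, it only remains to verify that $D_h$ is itself vertical. That is the short computation $\Tan\pi_{\Tan^{k-1}Q} \circ D_h = X - X = 0$, again by $\pi_{\Tan^{k-1}Q}\circ\alpha = \textnormal{Id}$ and the definition of $X$.

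I do not expect a genuine obstacle: the argument is structurally identical to the Lagrangian proof and in fact simpler, since the canonical $1$-form supplies $\alpha^*\omega_{k-1} = -\d\alpha$ for free in place of the more delicate handling of $s^*\omega_\Lag$ and $s^*E_\Lag$. The only point requiring care is invoking the tautological identity $\alpha^*\theta_{k-1} = \alpha$ correctly, as it is what ties condition (3) to the underlying geometry; the canonical splitting and the verticality of the defect field transfer verbatim because $\Tan^*(\Tan^{k-1}Q)$ is an ordinary cotangent bundle.
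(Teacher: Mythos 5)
Your proposal is correct and is precisely the argument the paper intends: the paper's proof simply says to repeat the pattern of Proposition \ref{prop:GenLagHJEquiv}, using the tautological identity $\alpha^*\theta_{k-1} = \alpha$ to obtain $\alpha^*\omega_{k-1} = -\d\alpha$, and your write-up fills in exactly that transferred argument (the $\alpha$-relatedness/tangency equivalence, the pull-back of the dynamical equation, and the vertical defect-field computation with the canonical splitting). No gaps; this matches the paper's approach.
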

\begin{proof}
The proof follows exactly the same pattern as in Proposition \ref{prop:GenLagHJEquiv}, taking into
account that from the properties of the tautological form $\theta_{k-1} \in \df^{1}(\Tan^*(\Tan^{k-1}Q))$
of the cotangent bundle, that is, we have $\alpha^*\theta_{k-1} = \alpha$ for every
$\alpha \in \df^{1}(\Tan^{k-1}Q)$.
Hence, taking the pull-back of the dynamical equation \eqref{eqn:HamDynEq}
by $\alpha$ we obtain
$$
\inn(X)\d\alpha = -\d(\alpha^*h) \, ,
$$
because we have
\begin{equation}\label{eqn:PullBackSymplecticFormByAlpha}
\alpha^*\omega_{k-1} = \alpha^*(-\d\theta_{k-1}) = -\d(\alpha^*\theta_{k-1}) = -\d\alpha \, .
\end{equation}
\end{proof}

\paragraph{Coordinate expression:}

Let $(q_0^A)$ be local coordinates in $Q$ and
$(q_0^A,\ldots,q_{k-1}^A)$ the induced natural coordinates in $\Tan^{k-1}Q$.
Then, $(q_0^A,\ldots,q_{k-1}^A,p_A^{0},\ldots,p_A^{k-1}) \equiv (q_i^A,p_A^i)$ are
natural coordinates in $\Tan^*(\Tan^{k-1}Q)$, which are also the adapted coordinates
to the $\pi_{\Tan^{k-1}Q}$-bundle structure.
Hence, a $1$-form $\alpha \in \df^{1}(\Tan^{k-1}Q)$ is given locally by
$\alpha(q_i^A) = (q_i^A,\alpha_A^i) = \alpha_A^i \d q_i^A$,
where $\alpha_A^i$ are local smooth functions in $\Tan^{k-1}Q$.

If $\alpha \in \df^{1}(\Tan^{k-1}Q)$ is a solution to the
generalized $k$th-order Hamiltonian Hamilton-Jacobi problem,
then by Proposition \ref{prop:GenHamHJEquiv}
this is equivalent to require the Hamiltonian vector field $X_h \in \vf(\Tan^*(\Tan^{k-1}Q))$
to be tangent to the submanifold $\textnormal{Im}(\alpha) \hookrightarrow \Tan^*(\Tan^{k-1}Q)$.
This submanifold is locally defined by the constraints $p_A^l - \alpha_A^l = 0$.
Thus, we must require $\Lie(X_h)(p_A^i-\alpha_A^i) \equiv X_h(p_A^i-\alpha_A^i) = 0$  (on $\textnormal{Im}(\alpha)$).
The Hamiltonian vector field $X_h$ is locally given by \cite{book:DeLeon_Rodrigues85,art:Prieto_Roman11}
\begin{equation}\label{eqn:HamVF}
X_h = \derpar{h}{p_A^i}\derpar{}{q_i^A} - \derpar{h}{q_i^A}\derpar{}{p_A^i} \, .
\end{equation}
Hence, the conditions $\restric{X_h(p_A^i-\alpha_A^i)}{\textnormal{Im}(\alpha)} = 0$
give the equations
\begin{equation}\label{eqn:GenHamHJLocal}
-\derpar{h}{q_i^A} - \derpar{h}{p_B^j}\derpar{\alpha_A^i}{q_j^B} = 0  \quad , \quad
\mbox{(on ${\rm Im}(\alpha)$)}\ .
\end{equation}
This is a system of $kn$ partial differential equations with $kn$
unknown functions $\alpha_A^i$
which must be verified by every $1$-form $\alpha \in \df^{1}(\Tan^{k-1}Q)$ solution
to the generalized $k$th-order Hamiltonian Hamilton-Jacobi.

\subsection{The Hamiltonian Hamilton-Jacobi problem}
\label{sec:HamHJProblem}

As in the Lagrangian setting, it is convenient to consider a less general problem requiring some
additional conditions to the $1$-form $\alpha \in \df^{1}(\Tan^{k-1}Q)$.
Observe that from \eqref{eqn:PullBackSymplecticFormByAlpha} the
condition $\alpha^*\omega_{k-1} = 0$ is equivalent to $\d\alpha = 0$; that is,
$\alpha$ is a closed $1$-form in $\Tan^{k-1}Q$. Therefore:

\begin{definition}\label{def:HamHJProblemClassic}
The \textnormal{$k$th-order Hamiltonian Hamilton-Jacobi problem}
consists in finding closed $1$-forms $\alpha \in \df^{1}(\Tan^{k-1}Q)$
solution to the generalized Hamiltonian Hamilton-Jacobi problem.
Such a form is called a \textnormal{solution to the $k$th-order Hamiltonian Hamilton-Jacobi problem}.
\end{definition}

A straightforward consequence of Proposition \ref{prop:GenHamHJEquiv} is the
following result:

\begin{prop}\label{prop:HamHJEquiv}
Let $\alpha \in \df^{1}(\Tan^{k-1}Q)$ be a closed $1$-form.
The following assertions are equivalent:
\begin{enumerate}
\item The $1$-form $\alpha$ is a solution to the $k$th-order Hamiltonian Hamilton-Jacobi problem.
\item $\d(\alpha^*h) = 0$.
\item $\textnormal{Im}(\alpha)$ is a Lagrangian submanifold of $\Tan^*(\Tan^{k-1}Q)$ invariant by $X_h$.
\item The integral curves of $X_h$ with initial conditions in $\textnormal{Im}(\alpha)$ project
onto the integral curves of $X = \Tan\pi_{\Tan^{k-1}Q} \circ X_h \circ \alpha$.
\end{enumerate}
\end{prop}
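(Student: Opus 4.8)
The plan is to derive everything from Proposition \ref{prop:GenHamHJEquiv} together with the single structural observation that the hypothesis ``$\alpha$ closed'' forces $\alpha^*\omega_{k-1} = 0$. Indeed, by \eqref{eqn:PullBackSymplecticFormByAlpha} we have $\alpha^*\omega_{k-1} = -\d\alpha$, so $\d\alpha = 0$ is equivalent to $\alpha^*\omega_{k-1} = 0$; this is the exact Hamiltonian counterpart of the condition $s^*\omega_\Lag = 0$ used in Proposition \ref{prop:LagHJEquiv}, and it is precisely what collapses the generalized problem to its simpler form. The whole argument is then the transcription of the proof of Proposition \ref{prop:LagHJEquiv} into the cotangent-bundle language, using $\alpha^*\theta_{k-1} = \alpha$ and \eqref{eqn:PullBackSymplecticFormByAlpha} in place of the Poincar\'{e}--Cartan pull-back computations.

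First I would prove $(1 \Leftrightarrow 2)$. By Definition \ref{def:HamHJProblemClassic}, $\alpha$ is a solution to the $k$th-order Hamiltonian Hamilton-Jacobi problem precisely when it is a closed solution to the generalized problem; since closedness is assumed, item 1 is equivalent to $\alpha$ being a solution to the generalized problem. Now I invoke item 3 of Proposition \ref{prop:GenHamHJEquiv}, namely $\inn(X)\d\alpha = -\d(\alpha^*h)$. As $\d\alpha = 0$, the left-hand side vanishes and this equation reduces to $\d(\alpha^*h) = 0$, which is item 2.

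Next, for $(1 \Leftrightarrow 3)$: the image of the section $\alpha$ is always a $kn$-dimensional submanifold of the $2kn$-dimensional manifold $\Tan^*(\Tan^{k-1}Q)$, and it is isotropic---hence Lagrangian---exactly when $\alpha^*\omega_{k-1} = 0$, i.e.\ exactly when $\alpha$ is closed, which holds by hypothesis. The remaining requirement, that $\textnormal{Im}(\alpha)$ be invariant by $X_h$, is item 2 of Proposition \ref{prop:GenHamHJEquiv}, equivalent to $\alpha$ being a solution to the generalized problem; combining the two gives the equivalence with item 1. Finally, $(1 \Leftrightarrow 4)$ follows from Proposition \ref{prop:GenHamHJRelatedVF} and its Corollary: a solution $\alpha$ makes $X_h$ and $X = \Tan\pi_{\Tan^{k-1}Q} \circ X_h \circ \alpha$ be $\alpha$-related, so any integral curve of $X_h$ starting in $\textnormal{Im}(\alpha)$ projects by $\pi_{\Tan^{k-1}Q}$ onto an integral curve of $X$, and conversely the projection property reproduces the $\alpha$-relatedness characterizing a solution.

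I do not expect a genuine obstacle here, since the content is entirely a specialization of Proposition \ref{prop:GenHamHJEquiv}. The only point needing minor care is recognizing that the closedness hypothesis is exactly the condition $\alpha^*\omega_{k-1}=0$, so that the term $\inn(X)\d\alpha$ drops out of the generalized Hamilton-Jacobi equation; everything else is a direct reading-off from the already established characterizations.
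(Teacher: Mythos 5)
Your strategy coincides with the paper's own: Proposition \ref{prop:HamHJEquiv} is presented there as ``a straightforward consequence of Proposition \ref{prop:GenHamHJEquiv}'', with no further details, and your observation that closedness of $\alpha$ is exactly the condition $\alpha^*\omega_{k-1}=0$ (via \eqref{eqn:PullBackSymplecticFormByAlpha}) is indeed the key specialization. Your treatment of $(1)\Leftrightarrow(2)$ and $(1)\Leftrightarrow(3)$ is correct and complete: with $\d\alpha=0$ the generalized equation $\inn(X)\d\alpha=-\d(\alpha^*h)$ collapses to $\d(\alpha^*h)=0$, and $\textnormal{Im}(\alpha)$, being $kn$-dimensional and isotropic (again because $\alpha^*\omega_{k-1}=-\d\alpha=0$), is Lagrangian, so item 3 reduces to the invariance characterization of Proposition \ref{prop:GenHamHJEquiv}.

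The gap is the implication $(4)\Rightarrow(1)$. Saying that ``the projection property reproduces the $\alpha$-relatedness'' is not an argument, and this is the one implication that is \emph{not} a formal specialization of Proposition \ref{prop:GenHamHJEquiv}. If $\sigma$ is the integral curve of $X_h$ with $\sigma(0)=\alpha(\bar y)$ and $\gamma=\pi_{\Tan^{k-1}Q}\circ\sigma$, item 4 only says that $\gamma$ is an integral curve of $X$; it does \emph{not} say that $\sigma$ stays in $\textnormal{Im}(\alpha)$, i.e.\ that $\sigma=\alpha\circ\gamma$, which is what items 1--3 assert. At $t=0$ the identity $\Tan\pi_{\Tan^{k-1}Q}(X_h(\alpha(\bar y)))=X(\bar y)$ holds by the very definition of $X$, so item 4 carries no pointwise information; for $t>0$ it only says that $X_h$ has the same $\pi_{\Tan^{k-1}Q}$-projection at the two points $\sigma(t)$ and $\alpha(\gamma(t))$ of the same fibre, and one cannot deduce $\sigma(t)=\alpha(\gamma(t))$ unless the fibre map $p\mapsto\Tan\pi_{\Tan^{k-1}Q}(X_h(q,p))$ is injective. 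Since Section \ref{HamHamProb} imposes no regularity on $h$, this can genuinely fail: take $k=1$, $Q=\R$, $h(q,p)=f(q)+cp$ with $c$ constant. Then
$$
X_h=c\,\derpar{}{q}-f'(q)\,\derpar{}{p}\quad,\quad X=\Tan\pi_{Q}\circ X_h\circ\alpha=c\,\derpar{}{q}\ ,
$$
so every integral curve of $X_h$ projects onto an integral curve of $X$, and item 4 holds for \emph{every} closed $\alpha=\alpha(q)\,\d q$, while $\d(\alpha^*h)=(f'+c\,\alpha')\,\d q\neq 0$ in general. Hence $(4)\Rightarrow(2)$ requires a nondegeneracy hypothesis on $h$ (e.g.\ that $h$ comes from a hyperregular Lagrangian via the Legendre--Ostrogradsky map, which is the setting the paper ultimately works in) together with an argument that actually uses it; otherwise item 4 must be demoted to a consequence of items 1--3. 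Your direction $(1)\Rightarrow(4)$ is fine, modulo one small addition: invariance plus uniqueness of integral curves gives $\sigma=\alpha\circ\gamma$, and only then does $\alpha$-relatedness yield that $\pi_{\Tan^{k-1}Q}\circ\sigma$ is an integral curve of $X$. To be fair, this weakness is inherited from the paper itself, which states item 4 without proof even though, right after Proposition \ref{prop:GenLagHJEquiv}, it warns that the converse of the projection property ``is not true unless we assume further assumptions''.
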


\paragraph{Coordinate expression:}

In coordinates we have
$$
\d h = \derpar{h}{q_i^A}\d q_i^A + \derpar{h}{p_A^i} \d p_A^i \quad ; \quad
(0 \leqslant i \leqslant k-1) \, ,
$$
and taking the pull-back of $\d h$ by the $1$-form $\alpha = \alpha_A^i \d q_i^A$, we have
$$
\alpha^*(\d h) = \left( \derpar{h}{q_i^A} +
\derpar{h}{p_B^j}\derpar{\alpha_B^j}{q_i^A} \right) \d q_i^A \, .
$$
Hence, the condition $\d(\alpha^*h) = 0$ in Proposition \ref{prop:HamHJEquiv} holds if, and only if,
the following $kn$ partial differential equations hold
\begin{equation}\label{eqn:HamHJLocal}
\derpar{h}{q_i^A} + \derpar{h}{p_B^j}\derpar{\alpha_B^j}{q_i^A} = 0
\end{equation}

Equivalently, we can require the $1$-form $\alpha \in \df^{1}(\Tan^{k-1}Q)$
to be closed, that is, $\d\alpha = 0$. Locally, this condition reads
\begin{equation}\label{eqn:HamHJLocalClosedForm}
\derpar{\alpha_A^i}{q_j^B} - \derpar{\alpha_B^j}{q_i^A} = 0 \ ,
\, \mbox{with } A \neq B \mbox{ or } i \neq j \, .
\end{equation}

Therefore, a $1$-form
$\alpha \in \df^{1}(\Tan^{k-1}Q)$ given locally by
$\alpha = \alpha_A^i\d q_i^A$ is a solution to the $k$th-Hamiltonian Hamilton-Jacobi
problem if, and only if, the local functions $\alpha_A^i$ satisfy the
$2kn$ partial differential equations given by \eqref{eqn:GenHamHJLocal} and
\eqref{eqn:HamHJLocal}, or equivalently \eqref{eqn:GenHamHJLocal} and
\eqref{eqn:HamHJLocalClosedForm}.
Observe that these $2kn$ partial differential equations may not be $\Cinfty(U)$-linearly independent.

In addition to the local equations for the $1$-form $\alpha \in \df^{1}(\Tan^{k-1}Q)$,
in this particular situation we can give the equation for the characteristic
Hamilton-Jacobi function. This equation is a generalization
to higher-order systems of the classical Hamiltonian Hamilton-Jacobi equation
\begin{equation}\label{eqn:HamHJClassicEquation}
h\left( q^A,\derpar{S}{q^A} \right) = E
\end{equation}
where $E \in \R$ is a constant, $S \in \Cinfty(Q)$ is the characteristic function and
$h \in \Cinfty(\Tan^*Q)$ is the Hamiltonian function.

As $\alpha \in \df^{1}(\Tan^{k-1}Q)$ is closed, by Poincar{\'e}'s Lemma
there exists a function $W \in \Cinfty(U)$, with $U \subseteq \Tan^{k-1}Q$ an open set,
such that $\alpha = \d W$. In coordinates
the condition $\alpha = \d W$ gives the following
$kn$ partial differential equations for $W$
\begin{equation}\label{eqn:HamHJCharFunctLocal}
\derpar{W}{q_i^A} = \alpha_A^i \, .
\end{equation}
Finally, as $\alpha^*h = h(q_i^A,\alpha_A^i) = h\left( q_i^A, \derpar{W}{q_i^A} \right)$,
the condition $\alpha^*h$ being locally constant gives
\begin{equation}\label{eqn:HamHJEquation}
h\left( q_i^A, \derpar{W}{q_i^A} \right) = E \, ,
\end{equation}
where $E \in \R$ is a local constant. This equation clearly generalizes the equation
\eqref{eqn:HamHJClassicEquation} to higher-order systems.

\subsection{Complete solutions}
\label{sec:HamHJCompleteSolutions}

The concept of {\sl complete solution} is defined
in an analogous way as in Section \ref{sec:LagHJCompleteSolutions}.

\begin{definition}
A \textnormal{complete solution to the $k$th-order Hamiltonian Hamilton-Jacobi problem} is
is a local diffeomorphism $\Phi \colon U \times \Tan^{k-1}Q \to \Tan^*(\Tan^{k-1}Q)$,
where $U \subseteq \R^{kn}$ is an open set,
such that, for every $\lambda \in U$, the map
 $\alpha_\lambda(\bullet)\equiv\Phi(\lambda,\bullet) \colon \Tan^{k-1}Q \to \Tan^*(\Tan^{k-1}Q)$
is a solution to the $k$th-order Hamiltonian Hamilton-Jacobi problem.

Then, the set $\{ \alpha_\lambda \mid \lambda \in U\}$ is also called a
\textnormal{complete solution to the $k$th-order Hamiltonian Hamilton-Jacobi problem}.
\end{definition}

It follows from the definition that a complete solution endows $\Tan^*(\Tan^{k-1}Q)$
with a foliation transverse to the fibers, and that the Hamiltonian vector field $X_h$
is tangent to the leaves.

Let $\{ \alpha_\lambda \mid \lambda \in U \}$ be a complete solution, and we consider the
set of associated vector fields
$$
\left\{ X_\lambda = \Tan\pi_{\Tan^{k-1}Q} \circ X_h \circ \alpha_\lambda \in \vf(\Tan^{k-1}Q)
\ ; \ \lambda \in U \subseteq \R^{kn} \right\} \, .
$$
Then, the integral curves of $X_\lambda$,
for different $\lambda \in U$, will provide all the integral curves of the Hamiltonian
vector field $X_h$. That is, if $\beta \in \Tan^*(\Tan^{k-1}Q)$, then there exists
$\lambda_o \in U$ such that if $p_o = \pi_{\Tan^{k-1}Q}(\beta)$,
then $\alpha_{\lambda_o}(p_o) = \beta$, and the integral curve
of $X_{\lambda_o}$ through $p_o$, lifted  to $\Tan^*(\Tan^{k-1}Q)$ by $\alpha_{\lambda_o}$,
gives the integral curve of $X_h$ through $\beta$.

Let us assume that $\Phi$ is a global diffeomorphism for simplicity. Then, given
$\lambda = (\lambda_i^A) \in \R^{kn}$, $0 \leqslant i \leqslant k-1$, $1 \leqslant A \leqslant n$,
we consider the functions $f_{j}^{B} \colon \Tan^{*}(\Tan^{k-1}Q) \to \R$,
$0 \leqslant j \leqslant k-1$, $1 \leqslant B \leqslant n$, given by
$$
f_{j}^{B} = \pr_{j}^{B} \circ \p_1 \circ \Phi^{-1} \, ,
$$
where $\p_1 \colon \R^{kn} \times \Tan^{k-1}Q \to \R^{kn}$ is the projection onto the first factor
and $\pr_{j}^{B} \colon \R^{kn} \to \R$ is given by $\pr_{j}^{B} = \pr^{B} \circ \pr_{j}$,
where $\pr^B$ and $\pr_i$ are the natural projections
$$
\begin{array}{rcl}
\pr_j \colon \R^{kn} & \longrightarrow & \R^{n} \\
(x_i^A) & \longmapsto & (x_{j}^1,\ldots,x_j^n)
\end{array} \quad ; \quad
\begin{array}{rcl}
\pr^{B} \colon \R^{n} & \longrightarrow & \R \\
(x^1,\ldots,x^n) & \longmapsto & x^{B}.
\end{array}
$$
Therefore, $f_{j}^{B}(\alpha_{\lambda}(q_i^A)) = (\pr_{j}^{B} \circ \p_{1} \circ \Phi^{-1} \circ \Phi)(\lambda_i^A,q_i^A) = \lambda_{j}^{B}$.

\begin{prop}\label{involution}
The functions $f_{j}^{B}$, $0 \leqslant j \leqslant k-1,$ $1 \leqslant B \leqslant n$ are in involution.
\end{prop}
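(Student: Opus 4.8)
The plan is to exploit the fact that the common level sets of the functions $f_j^B$ are exactly the Lagrangian leaves $\textnormal{Im}(\alpha_\lambda)$ of the foliation determined by the complete solution, together with the elementary principle that a family of functions whose joint level sets are Lagrangian submanifolds is automatically in involution. I will therefore reduce the statement to the Lagrangian character of the leaves, which is already available from Proposition \ref{prop:HamHJEquiv}.

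First I would record the two structural facts that drive the argument. By the computation preceding the statement, $f_j^B(\alpha_\lambda(q_i^A)) = \lambda_j^B$, so each $f_j^B$ is constant along every leaf $\textnormal{Im}(\alpha_\lambda) = \Phi(\{\lambda\}\times\Tan^{k-1}Q)$; equivalently, the $kn$ functions $f_j^B$ are the components of the submersion $\p_1\circ\Phi^{-1}$, so their differentials $\d f_j^B$ are pointwise linearly independent and their common level sets are precisely the images $\textnormal{Im}(\alpha_\lambda)$. Moreover, since each $\alpha_\lambda$ is a solution of the $k$th-order Hamiltonian Hamilton-Jacobi problem, Proposition \ref{prop:HamHJEquiv} guarantees that every leaf $\textnormal{Im}(\alpha_\lambda)$ is a Lagrangian submanifold of $(\Tan^*(\Tan^{k-1}Q),\omega_{k-1})$.

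Next I would fix a point $p$ on a leaf $L = \textnormal{Im}(\alpha_\lambda)$ and analyze the Hamiltonian vector fields $X_{f_j^B}$, defined by $\inn(X_{f_j^B})\omega_{k-1} = \d f_j^B$. Because each $f_j^B$ is constant on $L$, for any $v\in\Tan_pL$ we have $\omega_{k-1}(X_{f_j^B},v) = \d f_j^B(v) = 0$, so $X_{f_j^B}(p)$ lies in the symplectic orthogonal complement $(\Tan_pL)^{\perp}$. Since $L$ is Lagrangian, $(\Tan_pL)^{\perp} = \Tan_pL$, and hence every $X_{f_j^B}$ is tangent to $L$; a dimension count (the $\d f_j^B$ are independent and $\omega_{k-1}$ is non-degenerate, so the $X_{f_j^B}$ are independent) even shows that the vectors $\{X_{f_j^B}(p)\}$ form a basis of $\Tan_pL$.

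The conclusion is then immediate: writing the Poisson bracket as $\{f_j^B,f_l^C\} = \omega_{k-1}(X_{f_j^B},X_{f_l^C})$ and using that both vector fields are tangent to the Lagrangian submanifold $L$, on which $\omega_{k-1}$ restricts to zero, we get $\{f_j^B,f_l^C\} = 0$ at $p$; as every point lies on some leaf, this holds identically. The only step deserving care -- and the main obstacle to a fully rigorous statement -- is the verification that the $\d f_j^B$ are genuinely independent and cut out the leaves transversally, so that the common level sets really coincide with the Lagrangian images $\textnormal{Im}(\alpha_\lambda)$; this is exactly where the hypothesis that $\Phi$ is a diffeomorphism, making $\p_1\circ\Phi^{-1}$ a submersion, is indispensable.
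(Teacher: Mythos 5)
Your proof is correct and follows essentially the same route as the paper's: constancy of the $f_j^B$ along the leaves $\textnormal{Im}(\alpha_\lambda)$, the Lagrangian character of those leaves via Proposition \ref{prop:HamHJEquiv}, tangency of the Hamiltonian vector fields $X_{f_j^B}$ to the leaves via $(\Tan_p L)^{\perp} = \Tan_p L$, and vanishing of $\{f_j^B,f_l^C\} = \omega_{k-1}(X_{f_j^B},X_{f_l^C})$ on the isotropic tangent spaces. In fact your write-up is slightly more careful than the paper's at the tangency step, where the paper loosely writes $\d f_j^B \in (\Tan\,\textnormal{Im}(\alpha_\lambda))^{\bot}$ instead of phrasing it in terms of the Hamiltonian vector fields.
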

\begin{proof}
Let $\beta \in \Tan^{*}(\Tan^{k-1}Q)$. We will show that
$\{f_{i}^{j},f_{a}^{b}\}(\beta)=0$.

Since $\Phi$ is a complete solution, for every $\beta \in \Tan^*(\Tan^{k-1}Q)$
there exists $\lambda \in \R^{kn}$
such that $\alpha_\lambda(\pi_{\Tan^{k-1}Q}(\beta)) = \Phi(\lambda,\pi_{\Tan^{k-1}Q}(\beta)) = \beta$.
Then we have
$$
f_j^B(\beta) = (f_j^B \circ \alpha_\lambda)(\pi_{\Tan^{k-1}Q}(\beta))
= (\pr_{j}^{B} \circ \p_1 \circ \Phi^{-1} \circ \Phi)(\lambda,\pi_{\Tan^{k-1}Q}(\beta))
= \lambda_j^B \, ,
$$
that is, $f_j^B \circ \alpha_\lambda = (f_j^B \circ \Phi)(\lambda,\bullet) \colon \Tan^{k-1}Q \to \R$
is constant for every $\lambda \in \R^{kn}$. Therefore, we have
$$
\restric{\d f_{j}^{B}}{\Tan \textnormal{Im}(\alpha_{\lambda})}=0 \, .
$$

Now, since $\Phi$ is a complete solution, we have that
$\alpha_{\lambda} = \Phi(\lambda,\bullet)$ is a solution to the $k$th-order Hamiltonian
Hamilton-Jacobi problem. Therefore, from Prop. \ref{prop:HamHJEquiv},
$\textnormal{Im}(\Phi_{\lambda})$ is a Lagrangian submanifold of
$(\Tan^{*}(\Tan^{k-1}Q),\omega_{k-1})$, and then
$$
\left(\Tan \textnormal{Im}(\alpha_{\lambda})\right)^{\bot} = \Tan \textnormal{Im}(\alpha_{\lambda}) \, ,
$$
where $\left(\Tan \textnormal{Im}(\alpha_{\lambda})\right)^{\bot}$ denotes the
$\omega_{k-1}$-orthogonal of $\Tan \textnormal{Im}(\alpha_{\lambda})$.

From this, the result follows from the definition of the induced Poisson bracket,
which is
$$
\{ f_{j}^{B},f_{l}^{C}\}(\beta) = \omega_{k-1}(X_{f_{j}^{B}},X_{f_{l}^{C}})(\beta) \, ,
$$
and the facts that $\omega_{k-1}$ is symplectic,
$\d f_{j}^{B} \in \left(\Tan \textnormal{Im}(\alpha_{\lambda})\right)^{\bot} = \Tan \textnormal{Im}(\alpha_{\lambda})$,
and that there exists a unique vector field
$X_{f_{j}^{B}} \in \vf(\Tan^*(\Tan^{k-1}Q))$ satisfying
$\inn(X_{f_{j}^{B}})\omega_{k-1} = \d f_{j}^{B}$.
\end{proof}

\subsection{Relation with the Lagrangian formulation}

Up to this point we have stated both the Lagrangian and Hamiltonian
Hamilton-Jacobi problems for higher-order autonomous systems. Now, we
establish a relation between the solutions of the Hamilton-Jacobi problem in
both formulations. In particular, we show that
there exists a bijection between the set of solutions of the (generalized)
$k$th-order Lagrangian Hamilton-Jacobi problem and the set of solutions of the
(generalized) $k$th-order Hamiltonian Hamilton-Jacobi problem, given by the
Legendre-Ostrogradsky map.

\begin{definition} Let $(\Tan^{2k-1}Q,\Lag)$ be a
Lagrangian system. The {\rm Legendre-Ostrogradsky map} (or {\rm
generalized Legendre map\/}) associated to $\Lag$ is the map $\Leg
\colon \Tan^{2k-1}Q \to \Tan^*(\Tan^{k-1}Q)$ defined as follows:
 for every $u \in \Tan(\Tan^{2k-1}Q)$,
$$
\theta_\Lag(u) = \left\langle \Tan \rho_{k-1}^{2k-1}(u), \Leg(\tau_{\Tan^{2k-1}Q}(u)) \right\rangle
$$
\end{definition}

This map verifies that $\pi_{\Tan^{k-1}Q} \circ \Leg
= \rho^{2k-1}_{k-1}$, that is, it is a bundle map over $\Tan^{k-1}Q$.
Furthermore, we have that $\Leg^*\theta_{k-1} = \theta_\Lag$ and
$\Leg^*\omega_{k-1} = \omega_\Lag$.
Moreover, we have that $\Lag \in \Cinfty(\Tan^{k}Q)$ is a regular Lagrangian
if, and only if, $\Leg \colon \Tan^{2k-1}Q \to \Tan^*(\Tan^{k-1}Q)$ is a local
diffeomorphism, and $\Lag$ is said to be \textsl{hyperregular} if $\Leg$ is a global
diffeomorphism.

\noindent\textbf{Remark:} Observe that if $\Lag$ is
hyperregular, then $\Leg$ is a symplectomorphism and therefore the
symplectic structures are in correspondence. Therefore, the induced
Poisson brackets also are in correspondence and we have the
analogous result of \eqref{involution} in the Lagrangian formalism,
where the Poisson bracket is determined by the Poincar\'e-Cartan
$2$-form $\omega_{\Lag}$ as $\{f,g\} = \omega_{\Lag}(X_f,X_g).$

Given a local natural chart in $\Tan^{2k-1}Q$, we
can define the following local functions
$$
\hat p^{r-1}_A = \sum_{i=0}^{k-r}(-1)^i d_T^i\left(\derpar{\Lag}{q_{r+i}^A}\right) \ .
$$
Thus, bearing in mind the local expression of the form
$\theta_\Lag$, we can write $\theta_\Lag = \sum_{r=1}^k \hat
p^{r-1}_A\d q_{r-1}^A$, and we obtain that the expression in natural
coordinates of the map $\Leg$ is
$$
\Leg\left(q_i^A,q_j^A\right) = \left(q_i^A,p^i_A\right) \ , \
\mbox{\rm with $p^i_A\circ\Leg=\hat p^i_A$} \ .
$$

Now we establish the relation theorem. First we need the
following technical result:

\begin{lemma}\label{lemma:TechLemma}
Let $E_1 \stackrel{\pi_1}{\longrightarrow} M$ and $E_2 \stackrel{\pi_2}{\longrightarrow} M$
be two fiber bundles, $F \colon E_1 \to E_2$ a
fiber bundle morphism, and two $F$-related vector fields $X_1 \in \vf(E_1)$ and $X_2 \in \vf(E_2)$.
 If $s_1 \in \Gamma(\pi_1)$ is a section of $\pi_1$ and we define
a section of $\pi_2$ as $s_2 = F \circ s_1 \in \Gamma(\pi_2)$, then
$$
\Tan\pi_1 \circ X_1 \circ s_1 = \Tan\pi_2 \circ X_2 \circ s_2 \in \vf(M) \, .
$$
\end{lemma}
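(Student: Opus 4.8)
The plan is to compute both sides directly, using the two hypotheses—that $F$ is a fiber bundle morphism over the common base $M$, and that $X_1$ and $X_2$ are $F$-related—together with the functoriality of the tangent functor. No deep idea is needed: the whole statement is an exercise in chaining the relevant naturality identities.

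First I would record the two algebraic relations that the hypotheses encode. Since $F \colon E_1 \to E_2$ is a bundle morphism over $M$, it commutes with the projections, that is, $\pi_2 \circ F = \pi_1$. Since $X_1$ and $X_2$ are $F$-related, we have $\Tan F \circ X_1 = X_2 \circ F$. As a preliminary check, $s_2 = F \circ s_1$ is indeed a section of $\pi_2$, because $\pi_2 \circ s_2 = (\pi_2 \circ F) \circ s_1 = \pi_1 \circ s_1 = \textnormal{Id}_M$; moreover both sides of the claimed identity are genuine vector fields on $M$, by naturality of the tangent projection $\tau_M \circ \Tan\pi_a = \pi_a \circ \tau_{E_a}$ and the fact that $\tau_{E_a} \circ X_a = \textnormal{Id}_{E_a}$.

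Next I would transform the right-hand side. Substituting $s_2 = F \circ s_1$ and then applying the $F$-relatedness of the vector fields gives
$$
\Tan\pi_2 \circ X_2 \circ s_2 = \Tan\pi_2 \circ (X_2 \circ F) \circ s_1 = \Tan\pi_2 \circ \Tan F \circ X_1 \circ s_1 \, .
$$
Now functoriality of the tangent functor together with the morphism condition $\pi_2 \circ F = \pi_1$ yields
$$
\Tan\pi_2 \circ \Tan F = \Tan(\pi_2 \circ F) = \Tan\pi_1 \, ,
$$
so the expression collapses to $\Tan\pi_1 \circ X_1 \circ s_1$, which is precisely the left-hand side. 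This establishes the asserted equality in $\vf(M)$.

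Since the argument is purely formal, I do not anticipate a genuine obstacle. The only point that requires (routine) care is the tacit assumption that $F$ covers the \emph{identity} on $M$ rather than a nontrivial base map; this is exactly what makes $\pi_2 \circ F = \pi_1$ hold, and it is implicit both in the statement that $E_1$ and $E_2$ share the base $M$ and in the requirement that $s_2 = F \circ s_1$ be a section of $\pi_2$. In the intended application, $F = \Leg$, $\pi_1 = \rho^{2k-1}_{k-1}$, $\pi_2 = \pi_{\Tan^{k-1}Q}$, and the relation $\pi_{\Tan^{k-1}Q} \circ \Leg = \rho^{2k-1}_{k-1}$ recorded earlier supplies precisely this morphism condition.
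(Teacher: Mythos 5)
Your proof is correct and follows essentially the same argument as the paper's: both use the morphism condition $\pi_2 \circ F = \pi_1$, the $F$-relatedness identity $\Tan F \circ X_1 = X_2 \circ F$, and functoriality of the tangent functor, merely chaining the identities from the opposite end (you start from the right-hand side, the paper from the left). The extra verification that $s_2$ is a section and the remark about $F$ covering the identity on $M$ are sound but routine additions.
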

\begin{proof}
As $F \colon E_1 \to E_2$ is a fiber bundle morphism (that is,
$\pi_1 = \pi_2 \circ F$), and $X_1$ and $X_2$ are $F$-related (that
is, $\Tan F \circ X_1 = X_2 \circ F$), we have the following
commutative diagram
$$
\xymatrix{
\Tan E_1 \ar[rr]^{\Tan F} &  \ & \Tan E_2 \\
E_1 \ar[dr]_{\pi_1} \ar[u]^{X_1} \ar[rr]^{F} & \ & E_2 \ar[dl]^{\pi_2} \ar[u]_{X_2} \\
\ & M & \
}
$$
Then we have
\begin{align*}
\Tan\pi_1 \circ X_1 \circ s_1 &= \Tan(\pi_2 \circ F) \circ X_1 \circ s_1
= \Tan\pi_2 \circ \Tan F \circ X_1 \circ s_1 \\
&= \Tan\pi_2 \circ X_2 \circ F \circ s_1
= \Tan\pi_2 \circ X_2 \circ s_2 \, .
\end{align*}
\end{proof}

Then, the equivalence theorem is:

\begin{theorem}\label{thm:EquivalenceSolutionsLag&Ham}
Let $(\Tan^{2k-1}Q,\Lag)$ be a hyperregular Lagrangian system,
$(\Tan^*(\Tan^{k-1}Q),\omega_{k-1},h)$ its associated Hamiltonian system.
\begin{enumerate}
\item If $s \in \Gamma(\rho^{2k-1}_{k-1})$ is a solution to the (generalized) $k$th-order Lagrangian
Hamilton-Jacobi  problem, then the $1$-form
$\alpha = \Leg \circ s  \in \df^{1}(\Tan^{k-1}Q)$
is a solution to the (generalized) $k$th-order Hamiltonian Hamilton-Jacobi problem.
\item If $\alpha \in \df^{1}(\Tan^{k-1}Q)$ is a solution to
the (generalized) $k$th-order Hamiltonian Hamilton-Jacobi problem, then the section
$s  = \Leg^{-1} \circ \alpha \in \Gamma(\rho^{2k-1}_{k-1})$
is a solution to the (generalized) $k$th-order Lagrangian Hamilton-Jacobi problem.
\end{enumerate}
\end{theorem}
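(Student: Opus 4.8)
The plan is to reduce both statements to a single structural fact: that under hyperregularity the Legendre--Ostrogradsky map $\Leg$ intertwines the two dynamics, i.e.\ that $X_\Lag$ and $X_h$ are $\Leg$-related. Once this is in hand, both implications follow by transporting the characterizations of Propositions~\ref{prop:GenLagHJRelatedVF} and~\ref{prop:GenHamHJRelatedVF} along the diffeomorphism $\Leg$, using the technical Lemma~\ref{lemma:TechLemma} to identify the two associated vector fields.

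First I would establish the $\Leg$-relatedness $\Tan\Leg\circ X_\Lag = X_h\circ\Leg$. Recall that for a hyperregular Lagrangian the associated Hamiltonian is characterized by $\Leg^*h = E_\Lag$, and that $\Leg^*\omega_{k-1} = \omega_\Lag$. Let $\tilde X$ be the unique vector field $\Leg$-related to $X_\Lag$ (it exists because $\Leg$ is a diffeomorphism). Using the compatibility of interior products with pullbacks of $\Leg$-related fields together with the Lagrangian dynamical equation~\eqref{eqn:LagDynEq}, one computes
$$
\Leg^*(\inn(\tilde X)\omega_{k-1}) = \inn(X_\Lag)\Leg^*\omega_{k-1} = \inn(X_\Lag)\omega_\Lag = \d E_\Lag = \d(\Leg^*h) = \Leg^*(\d h).
$$
Since $\Leg$ is a diffeomorphism, $\Leg^*$ is injective on forms, so $\inn(\tilde X)\omega_{k-1} = \d h$; by uniqueness of the solution of~\eqref{eqn:HamDynEq} this forces $\tilde X = X_h$. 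This step is the crux of the argument, and its only delicate ingredient is the identity $\Leg^*h = E_\Lag$, which is precisely the defining property of the associated Hamiltonian in the Ostrogradsky formalism.

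For statement~1, let $s\in\Gamma(\rho^{2k-1}_{k-1})$ solve the generalized Lagrangian problem and set $\alpha = \Leg\circ s$. Because $\Leg$ is a bundle map over $\Tan^{k-1}Q$, i.e.\ $\pi_{\Tan^{k-1}Q}\circ\Leg = \rho^{2k-1}_{k-1}$, we get $\pi_{\Tan^{k-1}Q}\circ\alpha = \rho^{2k-1}_{k-1}\circ s = {\rm Id}$, so $\alpha$ is genuinely a $1$-form on $\Tan^{k-1}Q$. By Proposition~\ref{prop:GenLagHJRelatedVF}, $X_\Lag$ and its associated field $X = \Tan\rho^{2k-1}_{k-1}\circ X_\Lag\circ s$ are $s$-related, $X_\Lag\circ s = \Tan s\circ X$. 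Applying Lemma~\ref{lemma:TechLemma} with $F = \Leg$, $X_1 = X_\Lag$, $X_2 = X_h$ (now known to be $\Leg$-related), $s_1 = s$ and $s_2 = \alpha$, the Hamiltonian associated field of $\alpha$ coincides with $X$, that is $\Tan\pi_{\Tan^{k-1}Q}\circ X_h\circ\alpha = X$. It then remains to verify $\alpha$-relatedness, which is a direct composition:
$$
X_h\circ\alpha = X_h\circ\Leg\circ s = \Tan\Leg\circ X_\Lag\circ s = \Tan\Leg\circ\Tan s\circ X = \Tan(\Leg\circ s)\circ X = \Tan\alpha\circ X.
$$
Thus $X_h$ and $X$ are $\alpha$-related and, by Proposition~\ref{prop:GenHamHJRelatedVF}, $\alpha$ solves the generalized Hamiltonian problem.

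Statement~2 is proved symmetrically, replacing $\Leg$ by the diffeomorphism $\Leg^{-1}$ (which relates $X_h$ back to $X_\Lag$) and $\alpha$ by $s = \Leg^{-1}\circ\alpha$. Finally, to cover the non-generalized (standard) problems one only needs that the extra requirement is preserved: since $\alpha = \Leg\circ s$ and $\Leg^*\omega_{k-1} = \omega_\Lag$, we have $\alpha^*\omega_{k-1} = s^*\Leg^*\omega_{k-1} = s^*\omega_\Lag$, so $s^*\omega_\Lag = 0$ if and only if $\alpha^*\omega_{k-1} = 0$, i.e.\ $\alpha$ is closed. Hence $\Leg$ also restricts to a bijection between the solutions of the standard Lagrangian and Hamiltonian problems, and Propositions~\ref{prop:LagHJEquiv} and~\ref{prop:HamHJEquiv} apply on each side. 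The main obstacle is therefore concentrated in the $\Leg$-relatedness step; everything else is a formal transport of the related-field characterizations along $\Leg$.
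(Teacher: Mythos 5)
Your proposal is correct and follows essentially the same route as the paper: identify the associated vector fields via Lemma~\ref{lemma:TechLemma}, verify $\alpha$-relatedness by the composition $X_h\circ\alpha = X_h\circ\Leg\circ s = \Tan\Leg\circ X_\Lag\circ s = \Tan\alpha\circ X$, and transport the closedness condition through $\Leg^*\omega_{k-1}=\omega_\Lag$. The only difference is that you explicitly prove the $\Leg$-relatedness $\Tan\Leg\circ X_\Lag = X_h\circ\Leg$ (via $\Leg^*h=E_\Lag$ and injectivity of $\Leg^*$), a fact the paper uses implicitly as standard in the Ostrogradsky formalism, so your argument is somewhat more self-contained.
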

\begin{proof}
\begin{enumerate}
\item
Let $X = \Tan\rho^{2k-1}_{k-1} \circ X_\Lag \circ s$ and
$\bar{X} = \Tan\pi_{\Tan^{k-1}Q} \circ X_h \circ \alpha$ be the vector
fields associated to $s$ and $\alpha = \Leg \circ s$, respectively. From
Lemma \ref{lemma:TechLemma} we have $X = \bar{X}$, and hence
both vector fields are denoted by $X$.

Let $s$ be a solution to the generalized $k$th-order Lagrangian
Hamilton-Jacobi problem, and $\gamma \colon \R \to \Tan^{k-1}Q$
an integral curve of $X$. Therefore
\begin{align*}
X_h \circ (\alpha \circ \gamma) &= X_h \circ \Leg \circ s \circ \gamma
= \Tan\Leg \circ X_\Lag \circ s \circ \gamma \\
&= \Tan\Leg \circ \Tan s \circ X \circ \gamma
= \Tan(\Leg \circ s) \circ \dot\gamma \\
&= \Tan \alpha \circ \dot\gamma = \dot{\overline{\alpha \circ \gamma}} \ ,
\end{align*}
then $\alpha \circ \gamma$ is an integral curve of $X_h$
and thus $\alpha$ is a solution to the generalized $k$th-order Hamiltonian Hamilton-Jacobi problem.

Now, in addition, we require $s^*\omega_\Lag = 0$; that is, $s$
is a solution to the $k$th-order Lagrangian Hamilton-Jacobi problem. Then,
using \eqref{eqn:PullBackSymplecticFormByAlpha} we have
$$
\d\alpha = \alpha^*\omega_{k-1} = (\Leg \circ s)^*\omega_{k-1} = s^*(\Leg^*\omega_{k-1})
= s^*\omega_\Lag = 0 \, ,
$$
and hence $\alpha$ is a solution to the $k$th-order Hamiltonian Hamilton-Jacobi problem.
\item
The proof is analogous to the one for the item 1, but using $\Leg^{-1}$.
\end{enumerate}
\end{proof}

This result can be extended to complete solutions in a natural way.

Obviously, for regular but not hyperregular Lagrangian functions,
all these results hold only in the open sets where $\Leg$ is a diffeomorphism.

Theorem \ref{thm:EquivalenceSolutionsLag&Ham} allows us
to show that the vector field associated to a section solution to the (generalized)
Hamilton-Jacobi problem is a semispray of type $1$. First, we need the following technical result:

\begin{lemma}\label{lemma:TechLemma2}
Let $X \in \vf(\Tan^rQ)$ be a semispray of type $1$ on $\Tan^rQ$,
and $Y \in \vf(\Tan^sQ)$ ($s \leqslant r$)  which is $\rho^r_s$-related with $X$.
Then $Y$ is a semispray of type $1$ on $\Tan^sQ$.
\end{lemma}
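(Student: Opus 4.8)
The plan is to prove the statement through the integral-curve characterization of semisprays, which is precisely the definition given above: a vector field on $\Tan^sQ$ is a semispray of type $1$ if and only if every one of its integral curves is holonomic of type $1$, i.e. equals $j^s\phi$ for some curve $\phi\colon\R\to Q$. Accordingly, I would fix an arbitrary integral curve $\sigma\colon I\to\Tan^sQ$ of $Y$ and show that $\sigma=j^s(\beta^s\circ\sigma)$.

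First I would lift $\sigma$ to an integral curve of $X$ upstairs. Choosing any point $p_o\in(\rho^r_s)^{-1}(\sigma(0))\subseteq\Tan^rQ$ (here surjectivity of $\rho^r_s$ is used), let $\psi$ be the integral curve of $X$ with $\psi(0)=p_o$. Since $X$ and $Y$ are $\rho^r_s$-related, the composition $\rho^r_s\circ\psi$ is an integral curve of $Y$; this is the standard fact that $F$-related vector fields carry integral curves to integral curves, and it is exactly the infinitesimal content of $\Tan\rho^r_s\circ X=Y\circ\rho^r_s$. As $(\rho^r_s\circ\psi)(0)=\rho^r_s(p_o)=\sigma(0)$, uniqueness of integral curves of an ODE gives $\rho^r_s\circ\psi=\sigma$ on their common interval of definition.

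Next I would invoke that $X$ is a semispray of type $1$ on $\Tan^rQ$: its integral curve $\psi$ is holonomic of type $1$, so $\psi=j^r\phi$ with $\phi=\beta^r\circ\psi\colon\R\to Q$. Applying $\rho^r_s$ and using the projection identity $\rho^r_s\circ j^r\phi=j^s\phi$ for canonical liftings, I obtain
$$
\sigma=\rho^r_s\circ\psi=\rho^r_s\circ j^r\phi=j^s\phi \, .
$$
Since $\beta^s\circ\sigma=\beta^s\circ j^s\phi=\phi$, this shows $\sigma=j^s(\beta^s\circ\sigma)$, so $\sigma$ is holonomic of type $1$. As $\sigma$ was an arbitrary integral curve of $Y$, every integral curve of $Y$ is holonomic of type $1$, and therefore $Y$ is a semispray of type $1$ on $\Tan^sQ$.

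The routine ingredients are the transport of integral curves along $\rho^r_s$-related fields and the identity $\rho^r_s\circ j^r\phi=j^s\phi$, which is immediate in natural coordinates, where $j^r\phi=(\phi,\dot\phi,\ldots,\phi^{(r)})$ and $\rho^r_s$ simply drops the last $r-s$ blocks. The hard part will be the lifting step: I must be sure that the chosen integral curve $\sigma$ of $Y$ genuinely arises as the $\rho^r_s$-projection of an integral curve of $X$. This is where surjectivity of $\rho^r_s$ (to select $p_o$) together with the uniqueness theorem for integral curves enters, and I would only assert the conclusion on the common interval of definition, which already suffices to read off the local semispray form of $Y$.
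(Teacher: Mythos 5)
Your proof is correct and takes essentially the same route as the paper's: $\rho^r_s$-relatedness transports integral curves of $X$ to integral curves of $Y$, holonomy of the former projects down via the identity $\rho^r_s\circ j^r\phi=j^s\phi$, and surjectivity of $\rho^r_s$ guarantees that every integral curve of $Y$ arises as such a projection. If anything, your write-up is more complete than the paper's, which merely asserts the lifting step (``this holds due to the fact that the vector fields are $\rho^r_s$-related and $\rho^r_s$ is a surjective submersion''), whereas you carry it out explicitly by choosing a point $p_o$ in the fiber over $\sigma(0)$ and invoking uniqueness of integral curves to identify $\rho^r_s\circ\psi$ with $\sigma$.
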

\begin{proof}
Let $\gamma \colon \R \to \Tan^rQ$ be an integral curve of $X$.
Then, as $X$ is a semispray of type $1$,
there exists a curve $\phi \colon \R \to Q$ such that $j^{r}\phi = \gamma$.
Furthermore, as $X$ and $Y$ are $\rho^r_s$-related,
the curve $\rho^r_s \circ \gamma \colon \R \to \Tan^sQ$ is an integral curve of $Y$.
Hence, $\rho^r_s (j^{r}\phi) = j^s\phi$ is an integral curve of $Y$.

It remains to show that every integral curve of $Y$ is the projection to $\Tan^sQ$ via $\rho^r_s$ of an
integral curve of $X$, but this holds due to the fact that the vector fields are $\rho^r_s$-related and
$\rho^r_s$ is a surjective submersion. Therefore, $Y$ is a semispray of type $1$ in $\Tan^sQ$.
\end{proof}

\begin{prop}\label{prop:XSemispray1}
Let $(\Tan^{2k-1}Q,\Lag)$ be a hyperregular Lagrangian system,
and $(\Tan^*(\Tan^{k-1}Q),\omega_{k-1},h)$
the associated Hamiltonian system. Then, if $\alpha \in \df^{1}(\Tan^{k-1}Q)$ is a solution
to the $k$th-order Hamiltonian Hamilton-Jacobi problem, the vector field
$X = \Tan\pi_{\Tan^{k-1}Q} \circ X_h \circ \alpha$ is a semispray of type $1$
on $\Tan^{k-1}Q$.
\end{prop}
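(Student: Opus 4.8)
The plan is to transport the problem to the Lagrangian formalism, where the semispray structure of the dynamics is already known, and then identify the two associated vector fields by means of the Legendre--Ostrogradsky map. First I would invoke Theorem \ref{thm:EquivalenceSolutionsLag&Ham}(2): since $\alpha$ solves the $k$th-order Hamiltonian Hamilton--Jacobi problem and $\Lag$ is hyperregular, the section $s = \Leg^{-1} \circ \alpha \in \Gamma(\rho^{2k-1}_{k-1})$ is a solution to the corresponding Lagrangian problem. Next, $\Leg$ is a fibre bundle morphism over $\Tan^{k-1}Q$ (that is, $\pi_{\Tan^{k-1}Q} \circ \Leg = \rho^{2k-1}_{k-1}$) and $X_\Lag$, $X_h$ are $\Leg$-related (the identity $X_h \circ \Leg = \Tan\Leg \circ X_\Lag$ already exploited in the proof of Theorem \ref{thm:EquivalenceSolutionsLag&Ham}), so Lemma \ref{lemma:TechLemma} applied with $s_1 = s$ and $s_2 = \Leg \circ s = \alpha$ yields
$$
X = \Tan\pi_{\Tan^{k-1}Q} \circ X_h \circ \alpha = \Tan\rho^{2k-1}_{k-1} \circ X_\Lag \circ s \, .
$$
Thus the vector field associated to $\alpha$ coincides with the vector field associated to the Lagrangian solution $s$, and it suffices to prove that the latter is a semispray of type $1$ on $\Tan^{k-1}Q$.

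For the core of the argument I would use the integral-curve characterisation of holonomy together with the fact, recorded just after \eqref{eqn:LagDynEq}, that $X_\Lag$ is itself a semispray of type $1$ on $\Tan^{2k-1}Q$. Let $\gamma \colon \R \to \Tan^{k-1}Q$ be an arbitrary integral curve of $X$. Since $s$ solves the generalized Lagrangian problem, Proposition \ref{prop:GenLagHJRelatedVF} gives that $X$ and $X_\Lag$ are $s$-related, and hence the lifted curve $s \circ \gamma$ is an integral curve of $X_\Lag$. Because $X_\Lag$ is a semispray of type $1$, there is a curve $\phi \colon \R \to Q$ with $s \circ \gamma = j^{2k-1}\phi$. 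Composing with $\rho^{2k-1}_{k-1}$ and using $\rho^{2k-1}_{k-1} \circ s = \textnormal{Id}$ gives
$$
\gamma = \rho^{2k-1}_{k-1} \circ s \circ \gamma = \rho^{2k-1}_{k-1} \circ j^{2k-1}\phi = j^{k-1}\phi \, ,
$$
so that $\gamma$ is the canonical $(k-1)$-jet lift of $\phi = \beta^{k-1} \circ \gamma$; this is precisely the statement that $\gamma$ is holonomic of type $1$. As $\gamma$ was arbitrary, every integral curve of $X$ is holonomic of type $1$, whence $X$ is a semispray of type $1$ on $\Tan^{k-1}Q$.

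The point requiring care --- and the reason I would not simply apply Lemma \ref{lemma:TechLemma2} to the pair $(X_\Lag, X)$ --- is that $X$ and $X_\Lag$ are only $s$-related, not $\rho^{2k-1}_{k-1}$-related: in natural coordinates their projections $\Tan\rho^{2k-1}_{k-1} \circ X_\Lag$ and $X \circ \rho^{2k-1}_{k-1}$ agree only along $\textnormal{Im}(s)$, where the constraint $q_k^A = s_k^A$ holds, and differ elsewhere, so the hypothesis of Lemma \ref{lemma:TechLemma2} fails off $\textnormal{Im}(s)$. The integral-curve argument sidesteps this obstacle by living entirely on $\textnormal{Im}(s)$, which is invariant under $X_\Lag$. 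One should also record the input used above, namely that $X_\Lag$ and $X_h$ are $\Leg$-related; for hyperregular $\Lag$ this follows from $\Leg$ being a symplectomorphism with $\Leg^*\omega_{k-1} = \omega_\Lag$ intertwining the energies, $\Leg^*h = E_\Lag$, which is exactly the relation invoked in the proof of Theorem \ref{thm:EquivalenceSolutionsLag&Ham}. Beyond this bookkeeping no genuine difficulty remains.
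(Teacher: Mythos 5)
Your proof is correct, and its first half is exactly the paper's: reduce to the Lagrangian side by setting $s = \Leg^{-1}\circ\alpha$ (Theorem \ref{thm:EquivalenceSolutionsLag&Ham}), then use the $\Leg$-relatedness of $X_\Lag$ and $X_h$ together with Lemma \ref{lemma:TechLemma} to identify $X = \Tan\pi_{\Tan^{k-1}Q}\circ X_h\circ\alpha$ with $\bar X = \Tan\rho^{2k-1}_{k-1}\circ X_\Lag\circ s$. Where you diverge is the final step. The paper finishes by asserting that ``$X_\Lag\circ s$ is a semispray of type $1$ along $\rho^{2k-1}_{k-1}$'' and then citing Lemma \ref{lemma:TechLemma2}; but, as you correctly point out, Lemma \ref{lemma:TechLemma2} as stated requires $X$ and $X_\Lag$ to be $\rho^{2k-1}_{k-1}$-related, which holds here only along the constraint set containing $\textnormal{Im}(s)$ (in fact on all of $\{q_k^A = s_k^A\}$, a set slightly larger than $\textnormal{Im}(s)$, though your conclusion that global $\rho$-relatedness fails is unaffected), and the notion of ``semispray along a map'' is never defined in the paper. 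Your replacement --- taking an arbitrary integral curve $\gamma$ of $X$, lifting it by $s$ to an integral curve of $X_\Lag$ (Proposition \ref{prop:GenLagHJRelatedVF}, or directly the definition of a solution), writing it as $j^{2k-1}\phi$ by the semispray property of $X_\Lag$, and projecting to get $\gamma = j^{k-1}\phi$ --- is an explicit, self-contained argument that lives entirely on the invariant submanifold $\textnormal{Im}(s)$ and therefore needs no relatedness hypothesis off it. In effect you have supplied the rigorous content that the paper's appeal to Lemma \ref{lemma:TechLemma2} compresses (or glosses over); what the paper's version buys is brevity, and what yours buys is that every hypothesis invoked is literally satisfied.
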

\begin{proof}
Let $s = \Leg^{-1} \circ \alpha \in \Gamma(\rho^{2k-1}_{k-1})$ be the section associated to $\alpha$.
Then, by Lemma \ref{lemma:TechLemma}, if
$X = \Tan\pi_{\Tan^{k-1}Q} \circ X_h \circ \alpha$
and $\bar{X} = \Tan\rho^{2k-1}_{k-1} \circ X_\Lag \circ s$
are the vector fields on $\Tan^{k-1}Q$ associated
to $\alpha$ and $s$ respectively, then $X = \bar{X} = \Tan\rho^{2k-1}_{k-1} \circ X_\Lag \circ s$.
Hence, as $X_\Lag \in \vf(\Tan^{2k-1}Q)$ is the Lagrangian vector field solution to the equation
\eqref{eqn:LagDynEq} and $\Lag \in \Cinfty(\Tan^{k}Q)$ is a hyperregular Lagrangian function, we have
that $X_\Lag$ is a semispray of type $1$ on $\Tan^{2k-1}Q$.
In particular, $X_\Lag \circ s$ is a semispray of type $1$ along $\rho^{2k-1}_{k-1}$ and,
by Lemma \ref{lemma:TechLemma2}, $X$ is a semispray of type $1$ on $\Tan^{k-1}Q$.
\end{proof}

As a consequence of Proposition \ref{prop:XSemispray1},
the generalized $k$th-order Hamilton-Jacobi problem
can be stated in the following way:

\begin{definition}
The \textnormal{generalized $k$th-order Lagrangian} (resp., \textnormal{Hamiltonian}) \textnormal{Hamilton-Jacobi problem}
consists in finding a section $s \in \Gamma(\rho^{2k-1}_{k-1})$
(resp., a $1$-form $\alpha \in \df^{1}(\Tan^{k-1}Q)$)
such that, if $\gamma \colon \R \to Q$ satisfies that $j^{k-1}\gamma$ is an integral curve of
$X = \Tan\rho^{2k-1}_{k-1} \circ X_\Lag \circ s$
(resp., $X = \Tan\pi_{\Tan^{k-1}Q} \circ X_h \circ \alpha$),
then $s \circ j^{k-1}\gamma \colon \R \to \Tan^{2k-1}Q$
(resp., $\alpha \circ j^{k-1}\gamma \colon \R \to \Tan^*(\Tan^{k-1}Q)$)
is an integral curve of $X_\Lag$ (resp., $X_h$).
\end{definition}

\section{Examples}
\label{sec:Examples}

\subsection{The end of a javelin}
\label{sec:Example1}

Let us consider the dynamical system that describes the motion of the end of a thrown javelin.
This gives rise
to a $3$-dimensional second-order dynamical system, which is a particular case of the problem
of determining the trajectory of a particle rotating about a translating center \cite{art:Constantelos84}.
Let $Q = \R^3$ be the manifold modeling the configuration space for this system with coordinates
$(q_0^1,q_0^2,q_0^3) = (q_0^A)$. Using the induced coordinates in $\Tan^2\R^3$,
the Lagrangian function for this system is
$$
\Lag(q_0^A,q_1^A,q_2^A) = \frac{1}{2} \sum_{A=1}^{3} \left((q_1^A)^2 - (q_2^A)^2\right) \ ,
$$
which is a regular Lagrangian function since the Hessian matrix of $\Lag$ with respect to the second-order
velocities is
$$
\left( \derpars{\Lag}{q_2^B}{q_2^A} \right) = \begin{pmatrix} 1 & 0 & 0 \\ 0 & 1 & 0 \\ 0 & 0 & 1 \end{pmatrix}.
$$


The Poincar\'{e}-Cartan forms $\theta_\Lag$ and $\omega_\Lag$, and the Lagrangian energy are
locally given by
$$
\begin{array}{c}
\displaystyle \theta_\Lag = \sum_{A=1}^{3} \left( (q_1^A + q_3^A)\d q_0^A - q_2^A \d q_1^A\right) \quad ; \quad
\omega_\Lag = \sum_{A=1}^{3} \left( \d q_0^A \wedge \d q_1^A + \d q_0^A \wedge \d q_3^A - \d q_1^A \wedge \d q_2^A \right) \\
\displaystyle E_\Lag = \frac{1}{2} \sum_{A=1}^{3} \left( (q_1^A)^2 + 2q_1^Aq_3^A - (q_2^A)^2 \right).
\end{array}
$$
Thus, the semispray of type $1$, $X_\Lag \in \vf(\Tan^3\R^3)$ solution to the dynamical equation
\eqref{eqn:LagDynEq} is
$$
X_\Lag = q_1^A\derpar{}{q_0^A} + q_2^A\derpar{}{q_1^A} + q_3^A\derpar{}{q_2^A} - q_2^A\derpar{}{q_3^A} \, .
$$

Consider the projection $\rho^3_1 \colon \Tan^3\R^3 \to \Tan\R^3$. From Proposition \ref{prop:GenLagHJEquiv}
we know that the generalized second-order Lagrangian Hamilton-Jacobi problem consists in finding sections
$s \in \Gamma(\rho^3_1)$ such that the Lagrangian vector field $X_\Lag$ is tangent to the submanifold
$\textnormal{Im}(s) \hookrightarrow \Tan^3\R^3$. Suppose that the section $s$ is given locally
by $s(q_0^A,q_1^A) = (q_0^A,q_1^A,s_2^A,s_3^A)$. As the submanifold $\textnormal{Im}(s)$
is defined locally by the constraint functions $q_2^A - s_2^A$ and $q_3^A - s_3^A$, then the tangency
condition gives the following system of $6$ partial differential equations for the component functions
of the section
$$
s_3^A - q_1^B\derpar{s_2^A}{q_0^B} - s_2^B\derpar{s_2^A}{q_1^B} = 0 \quad ; \quad
s_2^A + q_1^B\derpar{s_3^A}{q_0^B} + s_2^B\derpar{s_3^A}{q_1^B} = 0 \, .
$$

In order to obtain the equations of the second-order Lagrangian Hamilton-Jacobi problem, we require
in addition the section $s \in \Gamma(\rho^3_1)$ to satisfy the condition $\d(s^*E_\Lag) = 0$, or,
equivalently, $s^*\omega_\Lag = 0$. From the local expression of the Cartan $2$-form
$\omega_\Lag \in \df^{2}(\Tan^3\R^3)$ given above, taking the pull-back by the section
$s(q_0^A,q_1^A) = (q_0^A,q_1^A,s_2^A,s_3^A)$ we obtain
$$
s^*\omega_\Lag = \sum_{A=1}^{3} \left[ \d q_0^A \wedge \d q_1^A + \derpar{s_3^A}{q_0^B} \d q_0^A \wedge \d q_0^B
+ \left( \derpar{s_3^A}{q_1^B} + \derpar{s_2^B}{q_0^A} \right) \d q_0^A \wedge \d q_1^B
- \derpar{s_2^A}{q_1^B} \d q_1^A \wedge \d q_1^B \right].
$$
Hence, the condition $s^*\omega_\Lag = 0$ gives the following partial differential equations
$$
\derpar{s_3^A}{q_0^B} = \derpar{s_2^A}{q_1^B} = \derpar{s_3^A}{q_1^B} + \derpar{s_2^B}{q_0^A} = 0 \, , \ \mbox{if } A \neq B \, ,
\qquad \derpar{s_3^A}{q_1^A} + \derpar{s_2^A}{q_0^A} + 1 = 0 \, .
$$
Hence, the section $s \in \Gamma(\rho^3_1)$ is a solution to the second-order Lagrangian
Hamilton-Jacobi problem if the following system of partial differential equations hold
$$
\begin{array}{c}
\displaystyle s_3^A = q_1^B\derpar{s_2^A}{q_0^B} + s_2^A\derpar{s_2^A}{q_1^A} \quad , \quad
q_1^A\derpar{s_3^A}{q_0^A} + s_2^B\derpar{s_3^A}{q_1^B} + s_2^A = 0 \, , \\[10pt]
\displaystyle \derpar{s_3^A}{q_1^B} + \derpar{s_2^B}{q_0^A} = 0 \quad , \quad
\derpar{s_3^A}{q_1^A} + \derpar{s_2^A}{q_0^A} + 1 = 0 \, .
\end{array}
$$

Finally, we compute the equations for the generating function $W$. The pull-back of the Cartan $1$-form
$\theta_\Lag$ by the section $s$ gives in coordinates
$$
s^*\theta_\Lag = \sum_{A=1}^{3} \left( (q_1^A + s_3^A)\d q_0^A - s_2^A \d q_1^A\right)
$$
Hence, requiring $s^*\theta_\Lag = \d W$ for a local function $W$ defined in $\Tan Q$ we obtain
$$
\derpar{W}{q_0^A} = q_1^A + s_3^A \quad ; \quad \derpar{W}{q_1^A} = -s_2^A \, ,
$$
and thus from $\d(s^*E_\Lag) = 0$, we have $s^*E_\Lag = \mbox{const.}$, that is,
$$
 \sum_{A=1}^{3} \left( q_1^A\derpar{W}{q_0^A} - \frac{1}{2}\left((q_1^A)^2 + \left( \derpar{W}{q_1^A}\right)^2 \right)\right) = \mbox{const.}
$$

\noindent{\bf Remark:} This equation cannot be stated in the general
case, since we need to clear the higher-order velocities from the
previous equations. This calculation is easy for this particular
example, but it depends on the Lagrangian function provided in the
general case.


Now, to establish the Hamiltonian formalism for the Hamilton-Jacobi
problem, we consider natural coordinates $(q_0^A,q_1^A,p_A^0,p_A^1)$
on the cotagent bundle $\Tan^*(\Tan\R^3)$, then the
Legendre-Ostrogradsky map $\Leg \colon \Tan^3\R^3 \to
\Tan^*(\Tan\R^3)$ associated to the Lagrangian function $\Lag$ is
$$
\Leg^*q_0^A = q_0^A \quad ; \quad \Leg^*q_1^AÂ \quad ; \quad
\Leg^*p_A^0 = q_1^A + q_3^A \quad ; \quad \Leg^*p_A^1 = -q_2^A \, ,
$$
and the inverse map $\Leg^{-1} \colon \Tan^*(\Tan\R^3) \to \Tan^3\R^3$ is given by
$$
(\Leg^{-1})^*q_0^A = q_0^A \quad ; \quad (\Leg^{-1})^*q_1^A = q_1^A \quad ; \quad
(\Leg^{-1})^*q_2^A = -p_A^1 \quad ; \quad (\Leg^{-1})^*q_3^A = p_A^0 - q_1^A \, .
$$
From these coordinate expressions it is clear that the Legendre-Ostrogradsky map
is a global diffeomorphism, that is, $\Lag$ is a hyperregular Lagrangian function.

The Hamiltonian function $h \in \Cinfty(\Tan^*(\Tan\R^3))$ is
$$
h = (\Leg^{-1})^*E_\Lag = \sum_{A=1}^{3} \left[p_A^0q_1^A - \frac{1}{2} \left( (q_1^A)^2 + (p_A^1)^2 \right)\right].
$$

From this Hamiltonian, applying the procedure given
in Section \eqref{HamHamProb}, one can obtain the Hamilton-Jacobi
equation for this problem which coincides with the Hamilton-Jacobi
equation given previously in the Lagrangian problem.

A particular solution of this Hamilton-Jacobi equation in dimension $1$ 
has been obtained in \cite{art:Constantelos84}. This particular solution is
$$
W(q_0,q_1) = \sqrt{2}\int \d q_{1}\sqrt{-\frac{1}{2}q_{1}^{2}+c_{2}q_{1}-c_{1}}+c_{2}q_{0}
\quad , \quad (c_1, c_2 \in \R) \ .
$$

\subsection{A (homogeneous) deformed elastic cylindrical beam with fixed ends}
\label{sec:Example2}

Consider a deformed elastic cylindrical beam with both
ends fixed. The problem is to determinate its shape; that is, the width of every
section transversal to the axis. This gives rise to a $1$-dimensional
second-order dynamical system, which is autonomous if we require the
beam to be homogeneous \cite{book:Benson06,book:Elsgoltz83,art:Prieto_Roman12_1}.
Let $Q$ be the $1$-dimensional smooth
manifold modeling the configuration space of the system with
local coordinate $(q_0)$. Then, in the natural coordinates of $\Tan^2Q$,
the Lagrangian function for this system is
$$
\Lag(q_0,q_1,q_2) = \frac{1}{2}\mu q_2^2 + \rho q_0 \, ,
$$
where $\mu,\rho \in \R$ are constants, and $\mu\neq0$. This is a
regular Lagrangian function because the Hessian matrix $\ds \left(
\derpars{\Lag}{q_2}{q_2} \right) = \mu$ has maximum rank equal to
$1$ when $\mu \neq 0$.


The local expressions for the Poincar\'{e}-Cartan
forms $\theta_\Lag \in \df^1(\Tan^3Q)$ and $\omega_\Lag\in\df^2(\Tan^3Q)$,
and the Lagrangian energy $E_\Lag \in \Cinfty(\Tan^3Q)$ are
\begin{align*}
\theta_\Lag = \mu ( -q_3 \d q_0 + \mu q_2 \d q_1) \ ; \
\omega_\Lag = \mu (-\d q_0 \wedge \d q_3 + \d q_1 \wedge \d q_2) \ ; \
E_\Lag = -\rho q_0 + \frac{1}{2}\mu q_2^2 - \mu q_1q_3 \, .
\end{align*}
Thus, the semispray of type $1$ $X_\Lag \in \vf(\Tan^3Q)$ solution to the
dynamical equation \eqref{eqn:LagDynEq} is given locally by
$$
X_\Lag = q_1\derpar{}{q_0} + q_2\derpar{}{q_1} + q_3\derpar{}{q_2} - \frac{\rho}{\mu}\derpar{}{q_3} \, .
$$
Observe that the Euler-Lagrange equation for this $1$-dimensional system is
$$
\frac{d^4}{dt^4}\gamma = -\frac{\rho}{\mu} \, ,
$$
where $\gamma \colon \R \to Q$ is a curve. Therefore, it is straightforward
to obtain the general solution, which is a $4$th order polynomial given by
$$
\gamma(t) = -\frac{\rho}{\mu}t^4 + c_3t^3 + c_2 t^2 + c_1 t + c_0
$$
where $c_0,c_1,c_2,c_3 \in \R$ are constants depending on the initial conditions
given.

Now, we state the equations of the Lagrangian Hamilton-Jacobi
problem for this system.
Consider the projection $\rho^3_1\colon\Tan^3Q \to \Tan Q$.
By Proposition \ref{prop:GenLagHJEquiv}, the generalized second-order Lagrangian
Hamilton-Jacobi problem consists in finding sections
$s \in \Gamma(\rho^3_1)$, given locally by
$s(q_0,q_1) = (q_0,q_1,s_2,s_3)$, such that the submanifold
$\textnormal{Im}(s) \hookrightarrow \Tan^3Q$ is invariant by the Lagrangian
vector field $X_\Lag \in \vf(\Tan^3Q)$. Since the constraints defining locally $\textnormal{Im}(s)$
are $q_2-s_2 = 0$, $q_3-s_3=0$, then the equations for the section $s$ are
$$
s_3 - q_1\derpar{s_2}{q_0} - s_2\derpar{s_2}{q_1} = 0 \quad ; \quad
-\frac{\rho}{\mu} - q_1\derpar{s_3}{q_0} - s_2\derpar{s_3}{q_1} = 0 \, .
$$

For the second-order Lagrangian Hamilton-Jacobi problem,
we must require, in addition, that the section $s \in \Gamma(\rho^3_1)$
satisfies $\d (s^*E_\Lag) = s^*\d E_\Lag = 0$.
From the local expression of the Lagrangian energy
$E_\Lag \in \Cinfty(\Tan^3Q)$ given above, we have
$$
\d E_\Lag = -\rho \d q_0 - \mu q_3 \d q_1 + \mu q_2 \d q_2 - \mu q_1 \d q_3 \, .
$$
Thus, taking the pull-back of $\d E_\Lag$ by the section
$s(q_0,q_1) = (q_0,q_1,s_2,s_3)$, we obtain
$$
s^*\d E_\Lag = \mu \left( -\frac{\rho}{\mu} + s_2\derpar{s_2}{q_0} - q_1 \derpar{s_3}{q_0} \right) \d q_0
+ \mu \left( - s_3 + s_2 \derpar{s_2}{q_1} - q_1\derpar{s_3}{q_1} \right) \d q_1 \, .
$$
Hence, the section $s \in \Gamma(\rho^3_1)$ is a solution to the
second-order Lagrangian Hamilton-Jacobi problem if its component functions
satisfy the following system of partial differential equations:
\begin{align*}
&s_3 - q_1\derpar{s_2}{q_0} - s_2\derpar{s_2}{q_1} = 0 \quad ; \quad
-\frac{\rho}{\mu} - q_1\derpar{s_3}{q_0} - s_2\derpar{s_3}{q_1} = 0 \\
&-\frac{\rho}{\mu} - q_1 \derpar{s_3}{q_0} + s_2\derpar{s_2}{q_0} = 0 \quad ; \quad
- s_3 - q_1\derpar{s_3}{q_1} + s_2 \derpar{s_2}{q_1} = 0
\end{align*}
These $4$ partial differential equations are not linearly independent.
In particular, the equations obtained requiring $\d(s^*E_\Lag) = 0$
can be reduced to a single
one by computing the pull-back of the Poincar\'{e}-Cartan $2$-form by the section $s$,
$$
s^*\omega_\Lag = -\mu \left( \derpar{s_3}{q_1} + \derpar{s_2}{q_0} \right) \d q_0 \wedge \d q_1 \, .
$$
Therefore, requiring $s^*\omega_\Lag = 0$ instead of the equivalent condition
$\d(s^*E_\Lag) = 0$, we have that $s$ is a solution to the second-order Lagrangian Hamilton-Jacobi
problem if its component functions satisfy the following
equivalent system of partial differential equations
$$
s_3 - q_1\derpar{s_2}{q_0} - s_2\derpar{s_2}{q_1} = 0 \ ; \
-\frac{\rho}{\mu} - q_1\derpar{s_3}{q_0} - s_2\derpar{s_3}{q_1} = 0 \ ; \
\derpar{s_3}{q_1} + \derpar{s_2}{q_0} = 0 \, ,
$$
where these $3$ equations are now linearly independent.

Finally, we compute the equations for the generating function $W$.
The pull-back of $\theta_\Lag$ by
$s$ gives, in coordinates:
$$
s^*\theta_\Lag = -\mu s_3 \d q_0 + \mu s_2 \d q_1 \, .
$$
Thus, requiring $s^*\theta_\Lag = \d W$, for a local function $W$ in $\Tan Q$, we obtain
$$
\derpar{W}{q_0} = - \mu s_3 \quad ; \quad
\derpar{W}{q_1} = \mu s_2 \, .
$$
and thus from $\d(s^*E_\Lag) = 0$, we have $s^*E_\Lag = \mbox{const.}$, that is,
$$
-\rho q_0 + \frac{1}{2\mu}\left( \derpar{W}{q_1} \right)^2 + q_1\derpar{W}{q_0} = \mbox{const.} \, ,
$$
which is a the Lagrangian Hamilton-Jacobi equation for this problem.

\noindent\textbf{Remark:}
Observe that, in this particular example, the Hamilton-Jacobi equation
is clearly more difficult to solve than the Euler-Lagrange equation.
Therefore, this example shows that it is important to be careful when applying
the Hamilton-Jacobi theory to a system, since the Hamilton-Jacobi equations obtained
can be harder to solve than the usual Euler-Lagrange (or Hamilton's) equations
of the system.
Nevertheless, observe that a solution of the system can be obtained from a solution
$\gamma \colon \R \to Q$ of the Euler-Lagrange equations as (see \cite{Vi-12})
$$
W(q_0,q_1) = \int_{t_0}^{t_1} \Lag (j^2\gamma(t)) \, \d t \ .
$$


Now, to establish the Hamiltonian formalism for the Hamilton-Jacobi
problem, we consider natural coordinates on
$\Tan^*\Tan Q$ and in these coordinates the Legendre-Ostrogradsky
map $\Leg \colon \Tan^3Q \to \Tan^*\Tan Q$ associated to the
Lagrangian function $\Lag$ is locally given by
$$
\Leg^*q_0 = q_0 \quad ; \quad
\Leg^*q_1 = q_1 \quad ; \quad
\Leg^*p^0 = -\mu q_3 \quad ; \quad
\Leg^*p^1 = \mu q_2 \, .
$$
Moreover, the inverse map $\Leg^{-1} \colon \Tan^*\Tan Q \to
\Tan^3Q$ is
$$
(\Leg^{-1})^*q_0 = q_0 \quad ; \quad
(\Leg^{-1})^*q_1 = q_1 \quad ; \quad
(\Leg^{-1})^*q_2 = \frac{p^1}{\mu} \quad ; \quad
(\Leg^{-1})^*q_3 = -\frac{p_0}{\mu} \, .
$$
From these coordinate expressions it is clear that $\Lag$ is a
hyperregular Lagrangian function, since the Legendre-Ostrogradsky
map is a global diffeomorphism.

The Hamiltonian function $h \in \Cinfty(\Tan^*\Tan Q)$ is
$$
h = (\Leg^{-1})^*E_\Lag = -\rho q_0 + \frac{(p^1)^2}{2\mu} + q_1p^0 \, .
$$

From this Hamiltonian applying the procedure given
in Section \eqref{HamHamProb} one can obtain the Hamilton-Jacobi
equations for this problem which coincides with the Hamilton-Jacobi
equations given previously by the Lagrangian problem, which is
$$
-\rho q_0 + \frac{1}{2\mu} \left(\derpar{W}{q_1}\right)^2 + q_1 \derpar{W}{q_0} = \mbox{const.} \,
$$

\section*{Acknowledgments}

We acknowledge the financial support of the \textsl{Ministerio de Ciencia e Innovaci\'{o}n} (Spain),
projects MTM 2010-21186-C02-01, MTM2011-22585 and  MTM2011-15725-E;
AGAUR, project 2009 SGR:1338.; IRSES-project ``Geomech-246981'';
and ICMAT Severo Ochoa project SEV-2011-0087.
P.D. Prieto-Mart\'{\i}nez wants to thank the UPC for a Ph.D grant, and
L.Colombo wants to thank CSIC for a JAE-Pre grant.
We wish to thanks Profs. D. Mart\'{\i}n de Diego and M.C. Mu\~{n}oz-Lecanda for fruitful discussions.
We thank Mr. Jeff Palmer for his assistance in preparing the English version of the manuscript.


\end{document}